\providecommand{\U}[1]{\protect\rule{.1in}{.1in}}
\newtheorem{theorem}{Theorem}
\newtheorem{corollary}{Corollary}
\newtheorem{definition}{Definition}
\newtheorem{lemma}{Lemma}
\newtheorem{proposition}{Proposition}
\newtheorem{remark}[theorem]{Remark}
\newenvironment{proof}[1][Proof]{\noindent\textbf{#1.} }{\ \rule{0.5em}{0.5em}}
\newcommand{\eqdef}{\coloneqq}
\newcommand{\ua}{\uparrow}
\newcommand{\mN}{\mathcal{N}}
\newcommand{\lr}{\rangle\!\langle}
\newcommand{\cptp}{{\rm CPTP}}
\newcommand{\mR}{\mathcal{R}}
\newcommand{\mF}{\mathcal{F}}
\newcommand{\mE}{\mathcal{E}}
\newcommand{\mD}{\mathcal{D}}
\newcommand{\la}{\langle}
\newcommand{\ra}{\rangle}
 \def\tA{\tilde{A}}
\def\tB{\tilde{B}}
  \def\cp{{\rm CP}}
\newcommand{\be}{\begin{equation}}
\newcommand{\ee}{\end{equation}}
\def\tA{\tilde{A}}
\newcommand{\tr}{{\rm Tr}}
\def\u{\mathbf{u}}
\def\H{\mathbf{H}}
\def\D{\mathbf{D}}
\newcommand{\md}{\mathfrak{D}}
\begin{document} 

\title[Inevitability of knowing less than nothing]{Inevitability of knowing less than nothing}
\author{Gilad Gour}
\affiliation{Faculty of Mathematics, Technion - Israel Institute of Technology, Haifa 3200003, Israel}
\email{giladgour@technion.ac.il}
\author{Mark M. Wilde}
\affiliation{School of Electrical and Computer Engineering, Cornell University, Ithaca, New
York 14850, USA}
\email{wilde@cornell.edu}
\author{S. Brandsen}
\affiliation{Department of Physics, Duke University, Durham, North Carolina 27708, USA}
\author{Isabelle Jianing Geng}
\affiliation{Department of Mathematics and Statistics, Institute for Quantum Science and
Technology, University of Calgary, Alberta, Canada T2N 1N4}
\keywords{quantum conditional entropy}

\begin{abstract}
  A colloquial interpretation of entropy is that it is the knowledge gained upon learning the outcome of a random experiment. Conditional entropy is then interpreted as the knowledge gained upon learning the outcome of one random experiment after learning the outcome of another, possibly statistically dependent, random experiment. In the classical world, entropy and conditional entropy take only non-negative values, consistent with the intuition that one has regarding the aforementioned interpretations. However, for certain entangled states, one obtains negative values when evaluating commonly accepted and information-theoretically justified formulas for the quantum conditional entropy, leading to the confounding conclusion that one can know less than nothing in the quantum world. Here, we introduce a physically motivated framework for defining quantum conditional entropy, based on two simple postulates inspired by the second law of thermodynamics (non-decrease of entropy) and extensivity of entropy, and we argue that all plausible definitions of quantum conditional entropy should respect these two postulates. We then prove that all plausible quantum conditional entropies take on negative values for certain entangled states, so that it is inevitable that one can know less than nothing in the quantum world. All of our arguments are based on constructions of physical processes that respect the first postulate, the one inspired by the second law of thermodynamics. 
\end{abstract}

\maketitle

\tableofcontents


\section{Introduction}

Quantum information science is famously known for having features that are not present in classical information science. Among these, perhaps the feature that distinguishes it the most from the classical case is that our knowledge of the state of two particles can be greater than our knowledge of the individual states of the particles. Indeed, for an Einstein--Podolsky--Rosen (EPR) state $|\Phi^{(2)}\rangle \coloneqq (|00\rangle + |11\rangle)/\sqrt{2}$~\cite{epr1935}, said to be entangled, our knowledge of the joint state of the two particles is complete. However, the state of an individual particle is described by an unbiased probabilistic ensemble  (i.e., $|0\rangle$ or $|1\rangle$ with equal probability), and we thus have incomplete knowledge of the state of an individual particle~\cite{S35}.

Given this peculiar situation, one of the earliest puzzles in quantum information science was to devise a quantum generalization of the concept of conditional entropy. In the classical case, conditional entropy is meant to capture the uncertainty about the state of one particle given access to a second particle that is potentially statistically dependent on the first~\cite{S48}. If there is no statistical dependence whatsoever, then the conditional entropy reduces to the usual entropy. However, if there is dependence, then there is a reduction in the uncertainty of the first particle when given access to the second. When generalizing the definition of conditional entropy in a straightforward way to the quantum case (by means of this reduction), one finds that quantum conditional entropy is negative when evaluated for the EPR state mentioned above~\cite{CA97}. Since conditional entropy is never negative in classical information science, this situation represents a radical departure from the classical case and has been popularly described as ``knowing less than nothing''~\cite{T05}.

In quantum information theory~\cite{H17,W17,Watrous2018,H19book,KW20book}, the traditional approach to resolving such puzzles is to devise a physically meaningful task for which an information quantity is an optimal rate for accomplishing the task. For the case of quantum conditional entropy, this approach was successfully applied and led to the introduction of the quantum state merging protocol~\cite{Horodecki2005,Horodecki:2007:107}. In this protocol, the goal is for one party $A$ to use entanglement and classical communication to transfer her share~$A$ of a joint state to another party $B$. If the conditional entropy $H(A|B)$ is non-negative, then the protocol consumes entanglement to accomplish the task, at a rate given by $H(A|B)$ EPR states per copy of the original state. Whereas, if the conditional entropy $H(A|B)$ is negative, then the protocol generates entanglement at a rate of $H(A|B)$ EPR states per copy of the original state, while transferring $A$ to~$B$. This protocol inspired further developments regarding conditional entropy in the context of the uncertainty principle~\cite{BCCRR10} and thermodynamics~\cite{Rio2011}, in which its negativity plays an essential role. 

Although these works brought great insight to the peculiar situation mentioned above, they mainly focus on a particular definition of the conditional entropy, based on the formula from~\cite{CA97}. It is also well known that there are an infinite number of ways to define quantum conditional entropy, with particular examples being the conditional min-entropy~\cite{Renner2005}, conditional max-entropy~\cite{KRS09}, conditional R\'enyi entropies~\cite{TCR09,MDSFT13,TBH13}, and those based more broadly on generalized divergences (see, e.g.,~\cite{KW20book}). Several of these alternate definitions have physical meanings as well~\cite{KRS09,HT16}.  However, all of these definitions are based on particular mathematical formulas for quantum conditional entropy that hitherto have not been derived on the basis of simple physical principles. As such, the aforementioned contributions do not address the question of why any plausible quantum conditional entropy can be negative when defining it in the most general way possible, which is consistent with some simple physical principles.

\subsection{Summary of results}

\label{sec:summary-results}

In this paper, it is our aim to address this pressing foundational question, and we do so by introducing a physically motivated framework for defining quantum conditional entropy, based on two simple postulates. The two postulates are related to the second law of thermodynamics and the extensivity of entropy, widely accepted in scenarios of physical interest. This novel framework is essentially the simplest way of defining any plausible quantum conditional entropy, and one of our main conclusions is that any such quantum conditional entropy must take on a negative value when evaluated for the EPR state. 
Following the popular terminology introduced in~\cite{T05}, we can summarize this finding  as the ``inevitability of knowing less than nothing'' in the quantum world. As such, our approach answers the aforementioned question, i.e., why the quantum conditional entropy can be negative. We additionally note here that this finding is in stark contrast to the conclusion, also presented here, that any plausible unconditional quantum entropy takes on a non-negative value for all quantum states (here quantum entropy is defined by a similar physically motivated framework related to that from~\cite{LY1999,LY1998} and that generalizes the classical case in~\cite{GT2021}).

One may wonder which of the two postulates is responsible for conditional entropy becoming negative in the quantum world. Interestingly, we identify that it is the extensivity postulate that does so. To arrive at this conclusion, we provide an example of a non-negative measure of conditional entropy that satisfies the monotonicity postulate of conditional entropy (i.e., the one related to the second law) and exhibits weak additivity for tensor-product states. This measure, essentially a regularized version of the ``measured'' conditional entropy, does not exhibit full additivity under arbitrary tensor-product states. This distinction is important, as it highlights that fully quantum additivity under tensor-product states commands negativity. See Appendix~\ref{appg} for further details.

As additional findings, we show that all plausible quantum conditional entropies cannot be smaller than the quantum conditional min-entropy, thus justifying once and for all the name of the latter quantity as the smallest plausible quantum conditional entropy. We also establish a logical equivalence between the non-negativity of all quantum conditional entropies and the well known reduction criterion~\cite{HH99} from entanglement theory. As a corollary, it follows that all separable states have non-negative quantum conditional entropy. 

In what follows, we present details of our claims. We first generalize the classical entropy framework from~\cite{GT2021} in order to define quantum entropy, and then we introduce our framework for defining quantum conditional entropy. After that, we indicate several properties of quantum conditional entropy that follow from the two postulates, before giving arguments for our main results. All arguments for our main results are based on constructions of physical processes that respect the first aforementioned postulate for quantum conditional entropy.  The only background needed to understand the rest of our paper is basic knowledge of density operators and quantum channels (physical processes), available in various textbooks on quantum information~\cite{H17,W17,Watrous2018,H19book,KW20book}.

\subsection{Notation}

Throughout our paper, we employ standard notation used in quantum information~\cite{H17,W17,Watrous2018,H19book,KW20book}. We use symbols like $\rho$, $\sigma$, $\omega$ to refer to quantum states (density operators) and $\mathfrak{D}(A)$ to denote the set of density operators acting on a Hilbert space $A$. We also identify a quantum system $A$ with its corresponding Hilbert space~$A$. Let $\u_A \coloneqq I_A / |A|$ denote the uniform state of a system~$A$, where $I_A$ is the identity operator acting on $A$ and $|A|$ denotes the dimension of~$A$. We use system labels to denote bipartite states as $\rho_{AB}, \sigma_{AB}, \omega_{AB} \in \mathfrak{D}(AB)$, and $\rho_A \equiv \operatorname{Tr}_B[\rho_{AB}]$ denotes the marginal state after performing a partial trace over system $B$. A quantum channel $\mathcal{N}_{A\to B}$ is a completely positive and trace-preserving map that takes an input system $A$ to an output system $B$. If the input and output systems are the same, then we simply write $\mathcal{N}_A$. A particular kind of quantum channel is an isometric channel, typically denoted by $\mathcal{U}_{A\to B}$ or $\mathcal{V}_{A\to B}$, and is realized as $\mathcal{U}_{A\to B}(\cdot) = U_{A\to B}(\cdot)(U_{A\to B})^\dag$, where $U_{A\to B}$ an isometry (i.e., it satisfies $(U_{A\to B})^\dag U_{A\to B} = I_A$).

\section{Uncertainty and entropy}

\subsection{Uncertainty measures}

Let us first define what it means for a function to be an uncertainty measure, following the approach of~\cite{GT2021} for the classical case. What we expect for any measure of uncertainty is that it does not decrease under the action of a mixing operation. Thus, we  demand that an uncertainty measure is a function that is monotone with respect to a set of mixing operations, similar to how one variant of the second law of thermodynamics states that the entropy of an isolated system left to spontaneous evolution does not decrease. This approach is also consistent with that taken in quantum resource theories when defining resource monotones~\cite{Chitambar_2019}, in turn inspired by the second law of thermodynamics. See also~\cite{HHH+2003,HHO2003}.

In the classical case of finite and discrete probability distributions, such a mixing operation is described by a random relabeling of values. For example, for a six-sided die described by a probability distribution, one can randomly relabel or permute the numbers on the faces of the die to realize a mixing operation, and the outcome of the die toss becomes more difficult to guess (and intuitively, more uncertain). Furthermore, the odds of winning a game of chance are lower after performing a mixing operation~\cite{BGG22}, providing a precise way of measuring uncertainty. 

Under such mixing operations,  the uniform distribution remains invariant. Furthermore, our intuition is that the uniform distribution should take the largest value of an uncertainty measure for all probability distributions of the same size, given that it is an unbiased distribution with equally likely outcomes and thus the unique distribution for which it is the most difficult to guess which outcome will occur (more generally, for which one is the least likely to win a game of chance~\cite{BGG22}). Given everything stated above, one thus defines a function to be an uncertainty measure in the classical case if it is not equal to the zero function (i.e., a function that trivially takes all inputs to the value zero) and if it does not decrease under the action of a mixing operation~\cite{GT2021}.

In order to generalize this notion to the quantum case, the key previously stated property that we use to define a mixing operation is the preservation of the uniform state~$\u$. In this way, it is guaranteed that a mixing operation does not increase the value of an uncertainty measure for the uniform state; rather, it remains constant. Indeed, such a state is already intuitively maximally uncertain, as argued above, and so it cannot be any more uncertain, among all states of a fixed dimension. At the very least, the mixing operations we consider for defining an uncertainty measure should be quantum channels because all physical processes are described by quantum channels. The additional property of preserving the uniform state thus constrains the mixing operations to be unital channels (i.e., channels that preserve the identity operator). Formally, recall that a channel $\mathcal{N}_A$ is unital if $\mathcal{N}_A(I_A)=I_A$, or equivalently, if $\mathcal{N}_A(\u_A)=\u_A$~\cite{H17,W17,Watrous2018,H19book,KW20book}. We thus define a function of quantum states to be an uncertainty measure if it is not equal to the zero function and if it does not decrease under the action of a unital channel~$\mathcal{N}_{A}$, i.e., 
\begin{equation}
f(\mathcal{N}_{A}(\rho_A)) \geq f(\rho_A)
\end{equation}
for every state $\rho_A$ and unital channel~$\mathcal{N}_{A}$. 

A state $\rho_A$ majorizes another state $\sigma_A$, denoted by $\rho_A \succ \sigma_A$, if the following inequalities hold for all $k\in \{1, \ldots, |A|\}$:
\begin{equation}
\sum_{i=1}^k \lambda_i^{\downarrow}(\rho_A) \geq \sum_{i=1}^k \lambda_i^{\downarrow}(\sigma_A),
\end{equation}
where $\lambda_i^{\downarrow}(\omega_A)$ denotes the components of the vector of eigenvalues of a density operator $\omega_A$ sorted in decreasing order.
It is well known that $\rho_A$ majorizes $\sigma_A$ if and only if there exists a unital channel $\mathcal{N}_A$ such that $\sigma_A = \mathcal{N}_A(\rho_A)$~\cite{U71,U72,U73} (see also~\cite[Theorem~4.32]{Watrous2018}). Thus, our definition of an uncertainty measure $f$ implies that $f$  reverses the majorization pre-order, i.e.,
\begin{equation}
\rho_A \succ \sigma_A \quad \Rightarrow \quad f(\rho_A) \leq f(\sigma_A).
\end{equation}

It is ideal and natural to generalize the definition of an uncertainty measure just slightly in order to compare the uncertainties of states of different dimensions. Suppose that we have a state $\rho_A$ with an uncertainty value given by $f(\rho_A)$ for some uncertainty measure $f$. Now consider the density matrix $\rho_{A'} \coloneqq \rho_A \oplus {\bf 0}$, where ${\bf 0}$ denotes the zero square matrix, so that $\rho_{A'}$ represents an embedding of $\rho_A$ into a larger system $A'$. There is no reason that the uncertainty measure should change at all under this embedding, because, intuitively, the extra matrix entries corresponding to ${\bf 0}$ neither increase nor decrease uncertainty. Indeed, when flipping a die on a table, one could add an extra degree of freedom corresponding to whether the table is upside down or right side up, and under normal circumstances, the table will never be upside down, so that the uncertainty of the die toss will be unaffected by this embedding. As such, we impose the additional constraint that the uncertainty measure $f$ satisfies the equality $f(\rho_A) = f(\rho_{A'})$ for every state $\rho_A$ and every embedding into a larger system~$A'$.

In this spirit, we now generalize the standard definition of majorization for quantum states from~\cite{U71,U72,U73} (see also~\cite{Watrous2018}).
\begin{definition}[Majorization]
Given two states $\rho_{A}$ and $\sigma _{A'}$, we say that $\rho_{A}$ majorizes $\sigma _{A'}$, and write 
\begin{equation}
    \rho_{A}\succ\sigma_{A'}
    \label{eq:final-maj-def}
\end{equation}
if there exist a system $A''$ such that $|A''| \geq \max\{|A|,|A'|\}$, isometric channels $\mathcal{V}_{A\to A''}$ and $\mathcal{U}_{A'\to A''}$,  and a unital channel $\mathcal{N}_{A''}$ such that
\begin{equation}
    \mathcal{U}_{A'\to A''}(\sigma _{A'}) =(\mathcal{N}_{A''}\circ \mathcal{V}_{A\to A''})(\rho_{A}) \;.
    \label{eq:iso-conditions-majorization}
\end{equation}
\end{definition}

The isometric channels introduced in the definition of majorization above enable us to compare states of different dimensions, as desired. Also, note that majorization between two probability vectors of the same dimension can be extended to vectors of different dimensions in this way (it is a standard approach to add zero components to the vector with the smaller dimension so that the resulting vectors have equal dimension). As discussed in the previous paragraph, such an extension is motivated by the fact that uncertainty measures should not change under embeddings.

In conclusion, our final definition of an uncertainty measure  is a function $f$ that is not equal to the zero function and reverses the majorization pre-order from~\eqref{eq:final-maj-def}; i.e., for all states $\rho_A$ and $\sigma_{A'}$,
\begin{equation}
\rho_A \succ \sigma_{A'} \quad \Rightarrow \quad f(\rho_A) \leq f(\sigma_{A'}).
\end{equation}

\subsection{Entropy}

Following the approach of~\cite{GT2021} for the classical case, an uncertainty measure is not necessarily an entropy. To be an entropy, we also require that the uncertainty measure possess an extensivity or additivity property. That is, $f$ is an entropy if it is an uncertainty measure and it is additive on tensor-product states; i.e., for all states $\rho_A$ and $\sigma_{A'}$,
\begin{equation}
f(\rho_A \otimes \sigma_{A'}) = f(\rho_A) + f(\sigma_{A'}).
\end{equation}
The additivity postulate of entropy is motivated by the basic expectation that the disorder of isolated, non-interacting systems should be the sum of the individual disorders. Indeed, if two physical systems are separate from each other, not having interacted in any way whatsoever previously, we expect that the disorder characterizing the overall system should simply be the sum of the individual disorders (i.e., there would not be a meaningful alternative way of quantifying the overall disorder in this scenario).

The motivation for additivity of entropy is also related to the second law of thermodynamics, in particular, the Clausius and Kelvin--Planck statements. In those statements, cyclic processes are considered, in which a system of interest undergoes a thermodynamic transition, while all other systems, including the environment, heat baths, etc., all return to their original state. In recent developments, using a quantum information-theoretic approach to small-scale thermodynamics~\cite{BHNOW15}, these cyclic processes were identified as catalytic processes. That is, consider a thermodynamical evolution of a physical system $A$, in a state~$\rho_A$, to a physical system $A'$ in a state~$\sigma_{A'}$. The thermal machine that corresponds to this transition includes all the other systems (e.g., heat baths, environment, etc.) that can be represented with an additional system $M$ in a state~$\tau_M$. Therefore, for cyclic processes, the thermodynamical transition
takes the form
\begin{equation}
\label{transition}
\rho_{A}\otimes\tau_{M}\to\sigma_{A'}\otimes\tau_{M}\;.
\end{equation}
In this context, the second law not only states that the entropy of system $A$ is no greater than that of system~$A'$, 
but also that this property holds if and only if the entropy of the joint state $\rho_{A}\otimes\tau_M$  increases (or remains unchanged) in such a thermodynamic cyclic process that returns $\tau_{M}$ intact. Clearly, if the entropy is measured with an additive function (under tensor products) then the entropy of  $\rho_A$ is never greater than that of $\sigma_{A'}$ if and only if the same relation holds between $\rho_{A}\otimes\tau_{M}$ and $\sigma_{A'}\otimes\tau_{M}$. Thus, the second law of thermodynamics also motivates the additivity postulate for entropy.

With all of this reasoning in place, we now provide a formal definition of entropy:
\begin{definition}[Entropy]
    A function
\begin{equation}
\H : \bigcup_{A} \mathfrak{D}(A) \rightarrow \mathbb{R}
\label{eq:entropy-def-1}
\end{equation}
is a quantum entropy if it is not equal to the zero function and satisfies the following two postulates for all systems~$A$ and $A'$ and states $\rho_{A}$ and $\sigma_{A'}$:
\begin{enumerate}
    \item Monotonicity:  
    \begin{equation}
\rho_{A} \succ \sigma_{A'} \quad \Rightarrow \quad   \H(\rho_A) \leq \H(\sigma_{A'}).    
    \label{eq:entropy-def-2}
    \end{equation}
    
    \item Additivity:
    \begin{equation}
    \H(\rho_A \otimes \sigma_{A'}) = \H(\rho_A) + \H(\sigma_{A'})    .
    \label{eq:entropy-def-3}
    \end{equation}
\end{enumerate}
\end{definition}

As stated previously, the requirement that an entropy not be equal to the zero function is to avoid a trivial situation in which the entropy of every state takes on the value zero.


By employing the two basic postulates of entropy and the same reasoning given in~\cite[Lemma~3]{GT2021}, some important properties follow (see Appendix~\ref{app:entropy-props} for proofs). The entropy of every state~$\rho$ is non-negative, i.e., $\H(\rho) \geq 0$, and it is equal to zero for every pure state $|\psi\rangle \! \langle \psi |$, i.e., $\H(|\psi\rangle \! \langle \psi |) = 0$. For every state $\sigma$ of a  fixed dimension~$d$, the entropy does not exceed the entropy of the uniform state of dimension $d$, i.e., $\H(\u^{(d)}) \geq \H(\sigma)$. Finally, the entropy of the uniform state of dimension two is strictly positive, i.e., $\H(\u^{(2)}) > 0$. This allows for setting the normalization of entropy to be in units of bits, such that $\H(\u^{(2)})=1$, and henceforth, we thus set the normalization of entropy in this way. It then follows that $\H(\u^{(d)}) = \log_2 d$.

\section{Conditional uncertainty and entropy}

\subsection{Conditional uncertainty measures}

\label{sec:q-cond-maj}

Our goal now is to define what it means for a function of a bipartite state~$\rho_{AB}$ to be a measure of conditional uncertainty. 
Inspired by the criterion of monotonicity under mixing operations in the previous case of unconditional uncertainty, we are interested in identifying a criterion for a bipartite channel $\mathcal{N}_{AB \to A'B'}$ to be a conditionally mixing operation. For the unconditional case, we ultimately stated that a quantum channel is a mixing operation if it preserves the uniform state, and this choice led to identifying unital channels as the mixing operations that define an uncertainty measure. The reason for this, as previously stated, is that the uniform state is the least informative state, it thus gives the least likelihood of winning a game of chance, and should thus be a state of maximal uncertainty (for which a mixing operation cannot increase uncertainty).

For the setting of conditional uncertainty, it remains the case that the uniform state $\u_{AB} = \u_{A} \otimes \u_B$ is the least informative state, for reasons similar to those given before. However, for conditional uncertainty, we are interested in the uncertainty of system  $A$ in the presence of system $B$. That is, the system $B$ can be used as an aid or side information to help win a game of chance performed on the system $A$. With this in mind, a state of the form $\u_A \otimes \rho_B$ is still least informative, with $\rho_B$ an arbitrary state, because the system $B$ is independent of system $A$ and thus cannot help at all in winning a game of chance performed on $A$. At the same time, the state of system $A$ is still uniform and thus the least informative. As such, we identify the set of states having the form $\u_A \otimes \rho_B$, where $\rho_B$ is an arbitrary state of system~$B$, as being the set of states with maximum conditional uncertainty.

Thus, we have argued that one requirement for a function to be a measure of conditional uncertainty is that it should not decrease under the action of a bipartite channel that preserves the set of states with maximum conditional uncertainty. We call such channels conditionally unital channels because they generalize unital channels.
Formally,
a bipartite channel $\mathcal{N}_{AB \rightarrow {AB'}}$ is a conditionally unital channel if for every $\rho_B \in \mathfrak{D}(B)$, there exists a state $\sigma_{B'} \in \mathfrak{D}(B')$ such that
\begin{align}
    \mathcal{N}_{AB \rightarrow {AB'}}(\mathbf{u}_{A} \otimes \rho_{B}) = \mathbf{u}_{A} \otimes \sigma_{B'}.
    \label{eq:cond-unital-def}
\end{align}
See Appendix~\ref{app:Choi-cond-unital} for an equivalent formulation of this condition in terms of the Choi matrix of $\mathcal{N}_{AB \rightarrow {AB'}}$. We note here that conditionally unital channels were independently defined in our companion paper~\cite{BGWG21} and in~\cite{Vempati2022unitaloperations}. See also~\cite[Lemma~3.1.12]{Renner2005} for an implicit definition of sub-conditionally-unital channels.




In the setting of conditional uncertainty, it is important to consider a further constraint on the class of transformations allowed. Intuitively, conditional uncertainty measures the uncertainty of system $A$ when one is given access to system $B$. Evidently, by allowing transformations in which information can leak from system $A$ to system~$B$, Bob could gain information that could decrease the uncertainty associated with system~$A$. Thus, it is natural to assume that channels that do not decrease conditional uncertainty do not allow for communication or signaling from Alice to Bob.  This requirement is also motivated from a cryptographic perspective, in which conditional uncertainty measures the amount of uncertainty that an eavesdropper has about another system that is intended to be secure~\cite{PR22} (with respect to our system labeling, the eavesdropper would have system $B$ and the secure system would be~$A$).
With this motivation in mind, let us recall the definition of $A\not \to B$ semi-causal channels~\cite{Beckman_2001}.  
A bipartite channel $\mathcal{N}_{AB \rightarrow AB'}$ is $A\not\to B'$ semi-causal if for every channel $\mathcal{M}_{A}$
\begin{align}
    \mathcal{N} _{AB \rightarrow B'} \circ \mathcal{M} _{A } = \mathcal{N} _{AB \rightarrow B'},
\end{align}
where
\begin{equation}
\mathcal{N} _{AB \rightarrow B'} \coloneqq  \operatorname{Tr}_A \circ \mathcal{N} _{AB \rightarrow AB'}.
\end{equation}
See Figure~\ref{fig_semi-causal} for a depiction of the semi-causality requirement. 

\begin{figure}
  \begin{overpic}[scale=0.5]{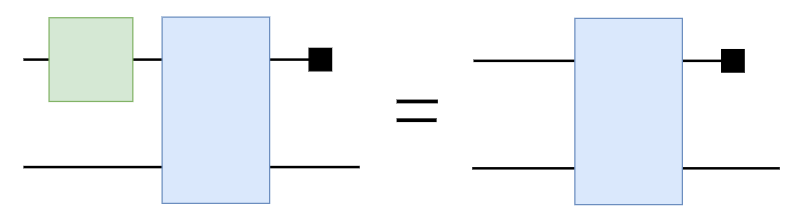}
  \put(20, 44){$\mathcal{M}$}
\put(59, 30){$\mathcal{N}$}
\put(183, 30){$\mathcal{N}$}
\end{overpic}
\caption{Depiction of the definition of a semi-causal channel~$\mathcal{N}$, where the black square represents the discarding channel.}
\label{fig_semi-causal}
\end{figure}

The authors of~\cite{ESW02} proved that every semi-causal channel can be written as a local channel on $B$ that feeds an output (corresponding to a quantum system~$R$) into a channel acting only on $A$ and $R$. See also~\cite{Piani_2006} and Figure~\ref{fig_semi-casual_channel} for a visual depiction. As such, for these channels, information can only flow from $B$ to $A$, and it never leaks from~$A$ to~$B$. Formally, if $\mathcal{N} _{AB \rightarrow AB'}$ is an $A \not\to B'$ semi-causal channel, then there exists a quantum system~$R$, a quantum channel $\mathcal{E}_{AR \rightarrow A}$, and an isometric channel~$\mathcal{F}_{B \rightarrow RB'}$ such that
\begin{equation}
    \mathcal{N} _{AB \rightarrow AB'} = 
 \mathcal{E}_{RA \rightarrow A} \circ \mathcal{F}_{B \rightarrow RB'} \;.
 \label{eq:semi-causal-breakdown}
\end{equation}

\begin{figure}
\large
  \begin{overpic}[scale=0.6]{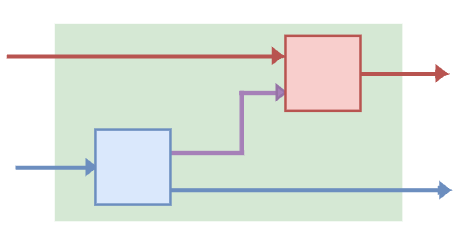}
  \put(62, 80){$\mathcal{N}_{AB\to AB'}$}
\put(43, 22){$\mathcal{F}$}
\put(112, 55){$\mathcal{E}$}
\put(67,32){$R$}
\put(8,25){$B$}
\put(8,65){$A$}
\put(147,62){$A$}
\put(147,20){$B'$}
\end{overpic}
\caption{Eq.~\eqref{eq:semi-causal-breakdown} asserts that a semi-causal bipartite channel $\mathcal{N}_{AB \rightarrow AB'}$ can be implemented in the depicted fashion.}
\label{fig_semi-casual_channel}
\end{figure}


We use the term conditionally mixing to describe a channel $\mathcal{N}_{AB\to AB'}$ that is both conditionally unital and $A \not\to B'$ semi-causal. Note that the properties of such a channel are invariant under the exchange of the $A$ input and $A$ output systems. This is most readily observed using the formalism of Choi matrices and we discuss it further in Appendix~\ref{app:locally-balanced-choi}. Importantly, we use conditionally mixing channels to define measures of conditional uncertainty.

%

Generalizing how unital channels define the majorization pre-order on quantum states, we now define a conditional majorization pre-order on bipartite quantum states, based on conditionally mixing channels. 
\begin{definition}[Conditional majorization]
\label{def:cond-maj-cond-mix}
Given two bipartite states $\rho_{AB}$ and $\sigma _{A'B'}$, we say that $\rho_{AB}$ conditionally majorizes $\sigma _{A'B'}$  with respect to system $A$, and write 
\begin{equation}
    \rho_{AB}\succ _{ A}\sigma_{A'B'}
        \label{eq:cond-maj-def-1}
\end{equation}
if there exist a system $A''$ such that $|A''| \geq \max\{|A|,|A'|\}$, isometric channels $\mathcal{V}_{A\to A''}$ and $\mathcal{U}_{A'\to A''}$,  and a conditionally mixing channel~$\mathcal{N}_{A''B \rightarrow A''B'}$  such that
\begin{equation}
    \mathcal{U}_{A'\to A''}(\sigma _{A'B'}) =(\mathcal{N}_{A''B\to A''B'} \circ \mathcal{V}_{A\to A''})(\rho_{AB}) \;.
        \label{eq:cond-maj-def-3}
 \end{equation}
%
\end{definition}

The isometric channels introduced in the definition above enable us to compare bipartite states of different dimensions, as was the case before with the generalized definition of majorization in~\eqref{eq:final-maj-def}--\eqref{eq:iso-conditions-majorization}. We note here that conditional majorization, as defined here, reduces to classical conditional majorization, as defined in~\cite{GGHKLV18}, when the the states involved are classical (see Appendix~\ref{app:q-cond-maj-reduces-to-classical-cond-maj} for a proof). Specifically,
we show that in the classical domain, conditionally mixing channels coincide with the conditionally doubly stochastic (CDS) channels that were defined in~\cite{GGHKLV18}.

We define a function $f$ to be a conditional uncertainty measure if it is not equal to the zero function and if it reverses the  conditional majorization pre-order, i.e., 
\begin{equation}
\rho_{AB} \succ_A \sigma_{A'B'} \quad \Rightarrow \quad f(\rho_{AB}) \leq f(\sigma_{A'B'}).
\end{equation}

\subsection{Conditional entropy}

Having defined measures of conditional uncertainty, we now follow the approach from before and define a function to be a conditional entropy if it is not only a conditional uncertainty measure but also additive on tensor-product states. Our motivation for the additivity postulate is the same as before.

\begin{definition}[Conditional entropy]
    A function
\begin{equation}
\H : \bigcup_{A, B} \mathfrak{D}(AB) \rightarrow \mathbb{R}
\label{eq:cond-ent-gen-def-1st}
\end{equation}
is a quantum conditional entropy if it is not equal to the zero function when $|B|=1 $ and satisfies the following postulates for all systems $A$, $B$, $A'$, $B'$ and states $\rho_{AB}$ and $\sigma_{A'B'}$:
\begin{enumerate}
    \item Monotonicity: 
    \begin{equation}
      \rho_{AB}\succ _{ A} \sigma_{A'B'} \quad \Rightarrow \quad \H(A|B)_{\rho} \leq \H(A'|B')_{\sigma}. 
    \label{eq:QCE-1st-postulate-mono}   
    \end{equation}
    
    \item Additivity:
    \begin{equation}
    \H(AA'|BB')_{\rho \otimes \sigma} = \H(A|B)_{\rho} + \H(A'|B')_{\sigma}    .
        \label{eq:QCE-2nd-postulate-add}
    \end{equation}

\end{enumerate}
\end{definition}

%

As indicated in the definition, a conditional entropy~$\H$ is defined for all density matrices in all finite dimensions. As a very special case, if $\rho\in\mathfrak{D}(AB)$ and $B$ is a trivial system for which $|B|=1$, then we write
\begin{equation}
    \H(A|B)_\rho=\H(A)_\rho\;,
\end{equation}
where $\H(A)_\rho$ is an entropy, as defined before.


Here we state some basic properties of quantum conditional entropy before presenting our main result in the next section (see Appendix~\ref{app:cond-ent-proofs} for proofs).
First, conditional entropy reduces to entropy when evaluated on a tensor-product state. That is, for every product state $\omega_{A} \otimes \tau_{B}$, the following equality holds: 
\begin{equation}
    \H(A|B)_{ \omega\otimes\tau}= \H(A)_{\omega}.
\end{equation}
Next, conditional entropy is invariant under the action of local isometric channels. That is, for a bipartite state~$\rho_{AB}$ and isometric channels~$\mathcal{U}_{A\to A'}$ and $\mathcal{V}_{B\to B'}$,
\begin{equation}
    \H(A|B)_{\rho} = \H(A'|B')_{\omega},
\end{equation}
where $\omega_{AB'} \coloneqq (\mathcal{U}_{A\to A'} \otimes \mathcal{V}_{B\to B'})(\rho_{AB})$.
Finally, 
consider the product state $\u^{(2)}_A \otimes \rho_B$. Similar to what we found previously for unconditional entropy, the following inequality holds
\begin{equation}
\H(A|B)_{\u^{(2)} \otimes \rho}  = \H(A)_{\u^{(2)}}> 0.
\end{equation}
This strict inequality allows us to set a normalization factor for conditional entropy. To be consistent with the normalization convention for unconditional entropy, we set $\H(A|B)_{\u^{(2)} \otimes \rho} = 1$, which in turn implies that
\begin{equation}
\H(A|B)_{\u^{(d)} \otimes \rho} = \log_2 d,
\end{equation}
where the state being evaluated is $\u^{(d)}_A \otimes \rho_B$.

\section{Main result: Inevitability of negative quantum conditional entropy}

We now have almost everything set up to state our main result. Before doing so, let us define  the conditional min-entropy~\cite{Renner2005} of a bipartite state $\rho_{AB}$ as
\begin{equation}
H_{\min}(A|B)_{\rho}  \coloneqq - \inf_{\lambda \geq 0} \log_2 \{\lambda : \rho_{AB} \leq \lambda I_A \otimes \rho_B\}
\label{eq:cond-min-ent-def}, 
\end{equation}
The conditional min-entropy is indeed  a conditional entropy according to~\eqref{eq:QCE-1st-postulate-mono}--\eqref{eq:QCE-2nd-postulate-add} (see Appendix~\ref{app:cond-min-ent-is-cond-ent} for a proof). 
This quantity was previously given the name conditional min-entropy because it is known to be the least among all R\'enyi conditional entropies~\cite{Renner2005,TCR09,MDSFT13}. As part of our main result, we strengthen this observation by proving that all plausible quantum conditional entropies are not smaller than the conditional min-entropy; that is, the conditional min-entropy is the smallest plausible conditional entropy.

The following theorem (proved in Appendix~\ref{app:proof-main-thm-main-txt}) states that every plausible quantum conditional entropy is bounded from below by the conditional min-entropy and also that this lower bound is saturated for the maximally entangled state. As a consequence, every plausible conditional entropy takes on a negative value for the maximally entangled state.

\begin{theorem}
\label{thm:neg-cond-ent}
Let $\H$ be a quantum conditional entropy, as defined in~\eqref{eq:cond-ent-gen-def-1st}--\eqref{eq:QCE-2nd-postulate-add}. Then, for every state $\rho_{AB}$,
\begin{equation}
\H(A|B)_\rho\geq H_{\min}(A|B)_\rho.
\label{eq:main-thm}
\end{equation}
Equality is attained in~\eqref{eq:main-thm} if $\rho_{AB}$ is equal to a maximally entangled state $\Phi^{(k)}_{A'B'}$ by the action of local isometric channels. Thus,
\begin{equation}
\H(A|B)_{\Phi^{(k)}} = -\log_2 k
\end{equation}
for every conditional entropy $\H$.
\end{theorem}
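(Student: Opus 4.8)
The plan is to prove the two assertions — the universal lower bound $\H(A|B)_\rho \ge H_{\min}(A|B)_\rho$ and the exact value $\H(A|B)_{\Phi^{(k)}} = -\log_2 k$ — by exhibiting explicit locally balanced (conditionally unital and $A\not\to B'$ semicausal) channels, so that monotonicity under conditional majorization, together with additivity and the normalization $\H(A|B)_{\u^{(d)}\otimes\rho}=\log_2 d$, pins down the relevant values. Throughout I would use that $H_{\min}$ is itself a conditional entropy, and that $\lambda^{\ast}:=2^{-H_{\min}(A|B)_\rho}$ is the least $\lambda$ with $\rho_{AB}\le\lambda\, I_A\otimes\rho_B$.

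First I would fix the value on maximally entangled states. Since $\Phi^{(km)}$ is related to $\Phi^{(k)}\otimes\Phi^{(m)}$ by local isometries, additivity and local-isometric invariance force $\H(A|B)_{\Phi^{(k)}} = c\log_2 k$ for a single real constant $c$ that is the same for every conditional entropy (and $\H(\Phi^{(1)})=0$). For the easy inequality I would write down the channel in which Bob measures his share of $\Phi^{(k)}$ in the Schmidt basis, obtains $i$, records it in $B'$, and forwards $i$ to Alice, who applies the shift unitary taking the collapsed $|i\rangle_A$ to $|0\rangle$; since an outcome-controlled unitary preserves the uniform state this channel is conditionally unital, and it is manifestly semicausal, so $\Phi^{(k)}\succ_A |0\rangle_A\otimes\u^{(k)}_{B'}$ and hence $\H(\Phi^{(k)})\le 0$. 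Obtaining the sharp constant $c=-1$ is the delicate step, and I expect it to be the main obstacle: reaching any pure or product reference state from $\Phi^{(k)}$ only yields $\H(\Phi^{(k)})\le 0$, because an uncorrelated uniform factor appended on Alice's side cannot be purified by $B\to A$ operations, so the matching bound $\H(\Phi^{(k)})\le-\log_2 k$ must be extracted either from a catalytic additivity identity or, more plausibly, by showing that the min-entropy lower bound below is saturated at $\Phi^{(k)}$.

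For the universal lower bound I would start from $\rho_{AB}\le\lambda^{\ast} I_A\otimes\rho_B$ and construct, for every integer $k\ge\lambda^{\ast}$, a locally balanced channel witnessing $\Phi^{(k)}\succ_A\rho_{AB}$. Here Bob uses his half of $\Phi^{(k)}$ as a steering resource: he performs a measurement whose statistics encode the Choi operator $\rho_B^{-1/2}\rho_{AB}\rho_B^{-1/2}$, which is bounded above by $\lambda^{\ast} I$ — exactly the room a $k$-dimensional maximally entangled state provides — and forwards classical data that Alice uses to complete the preparation of $\rho_{AB}$; verifying that this map is conditionally unital and semicausal is the technical heart of the argument. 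Monotonicity together with the value from the previous step gives $\H(\rho)\ge\H(\Phi^{(k)}) = -\log_2 k$, and taking $k=\lceil\lambda^{\ast}\rceil$ and regularizing over $\rho^{\otimes n}$ — using additivity of both $\H$ and $H_{\min}$ and the fact that $\lceil(\lambda^{\ast})^n\rceil/(\lambda^{\ast})^n\to 1$ — upgrades this to $\H(\rho)\ge-\log_2\lambda^{\ast} = H_{\min}(\rho)$. Specializing to $\rho=\Phi^{(k)}$ (where $\lambda^{\ast}=k$) yields $\H(\Phi^{(k)})\ge-\log_2 k$, which closes the value computation and establishes equality in the main bound.

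The two points on which I would spend most of the effort are (i) the channel realizing $\Phi^{(k)}\succ_A\rho$ from the min-entropy operator inequality, and in particular the verification of conditional unitality, which is the property distinguishing these processes from arbitrary semicausal ones; and (ii) arranging the logical order so that the constant $c$ is pinned to $-1$ without circularity. Because a naive reachability argument gives only $\H(\Phi^{(k)})\le 0$, establishing $c\le-1$ appears to require the saturation of the min-entropy bound at the maximally entangled state rather than a direct construction, so I would prove the lower bound and the value in tandem, treating $\Phi^{(k)}$ simultaneously as the reference state for the bound and as the extremal case where it is attained.
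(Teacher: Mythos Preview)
Your plan has a genuine circularity that you flag but do not resolve. The route you call ``more plausible'' --- deducing $\H(\Phi^{(k)})\le -\log_2 k$ from saturation of the min-entropy bound --- cannot close, because in your argument the bound $\H(\rho)\ge -\log_2 k$ is obtained from $\Phi^{(k)}\succ_A\rho$ \emph{together with} the value $\H(\Phi^{(k)})=-\log_2 k$. Specializing to $\rho=\Phi^{(k)}$ then returns only the tautology $\H(\Phi^{(k)})\ge\H(\Phi^{(k)})$. With merely $\H(\Phi^{(k)})\le 0$ in hand, the chain $\H(\rho)\ge\H(\Phi^{(\lceil\lambda^{\ast}\rceil)})$ gives no quantitative lower bound at all. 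Even the preliminary claim $\H(\Phi^{(k)})=c\log_2 k$ for a single constant $c$ does not follow from additivity alone: the Cauchy equation $f(km)=f(k)+f(m)$ on positive integers admits non-logarithmic solutions, so you would first need monotonicity of $k\mapsto\H(\Phi^{(k)})$, which itself requires a construction.

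The paper breaks the circle in two independent ways. For the upper bound on $\H(\Phi^{(k)})$ it uses precisely the ``catalytic additivity identity'' you mention but set aside: append $\u^{(k)}_{\tilde A}$ on Alice's side and apply the locally balanced channel that traces out~$\tilde A$, measures $\{\Phi^{(k)}_{AB},I_{AB}-\Phi^{(k)}_{AB}\}$, and records the outcome in a $k^2$-dimensional classical register~$X$; this sends $\Phi^{(k)}_{AB}\otimes\u^{(k)}_{\tilde A}$ to the pure state $|1\rangle\!\langle 1|_X$, so additivity gives $\H(\Phi^{(k)})+\log_2 k\le 0$ directly. For the lower bound the paper never routes through $\Phi^{(k)}$: it builds measure-and-prepare channels from a \emph{classical} register $X$ into $AB$ (or into $AA'B$ when $H_{\min}<0$), with complementary output $(I_A\otimes\rho_B-m\rho_{AB})/(k-m)$, respectively $(k'I_A\otimes\rho_B-\rho_{AB})/(kk'-1)$, which are conditionally unital and $A\not\to B$ semi-causal by inspection; comparing against $\u^{(m)}_X$ or $|1\rangle\!\langle 1|_X$ then uses only entropy values on uniform and pure states, already fixed by the axioms, and regularization removes the floor/ceiling. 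Your steering construction $\Phi^{(k)}\succ_A\rho$ may well be correct, but it is neither needed nor, without the missing upper bound on $\H(\Phi^{(k)})$, sufficient.
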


The following statement is a direct consequence of Theorem~\ref{thm:neg-cond-ent}, indicating  that every plausible conditional entropy being non-negative is equivalent to the reduction criterion~\cite{HH99}, well known in entanglement theory. Thus, Corollary~\ref{cor:pos-QCE-reduct} provides a direct link between every conditional entropy (including the conditional min-entropy) and the reduction criterion.

\begin{corollary}
\label{cor:pos-QCE-reduct}
Let $\rho_{AB}$ be a bipartite state. Then the following are equivalent:
\begin{enumerate}
\item For every conditional entropy $\H$,
\begin{equation}
\H(A|B)_\rho\geq 0.
\end{equation}
\item The reduction criterion holds: $I_A\otimes\rho_B\geq\rho_{AB}$.
\end{enumerate}
\end{corollary}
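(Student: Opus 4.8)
The plan is to reduce the entire equivalence to a single observation: the reduction criterion $I_A\otimes\rho_B\geq\rho_{AB}$ is precisely the statement that the conditional min-entropy is non-negative, $H_{\min}(A|B)_\rho\geq 0$. To establish this, I would unpack the definition in \eqref{eq:cond-min-ent-def}, writing $H_{\min}(A|B)_\rho=-\log_2\lambda^\star$ with $\lambda^\star\coloneqq\inf\{\lambda\geq 0:\rho_{AB}\leq\lambda I_A\otimes\rho_B\}$. Since $I_A\otimes\rho_B\geq 0$, the operator inequality $\rho_{AB}\leq\lambda I_A\otimes\rho_B$ is preserved as $\lambda$ increases and is a closed condition on $\lambda$, so the set of admissible $\lambda$ is a closed half-line $[\lambda^\star,\infty)$ and the infimum is attained. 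Consequently $\lambda=1$ is admissible---which is exactly the reduction criterion---if and only if $\lambda^\star\leq 1$, i.e.\ if and only if $H_{\min}(A|B)_\rho=-\log_2\lambda^\star\geq 0$.

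With this equivalence in hand, both implications follow almost immediately from Theorem~\ref{thm:neg-cond-ent} together with the fact (Appendix~\ref{app:cond-min-ent-is-cond-ent}) that $H_{\min}$ is itself a quantum conditional entropy. For $(2)\Rightarrow(1)$, I would assume the reduction criterion, deduce $H_{\min}(A|B)_\rho\geq 0$ from the observation above, and then invoke the lower bound \eqref{eq:main-thm} to conclude that every conditional entropy $\H$ obeys $\H(A|B)_\rho\geq H_{\min}(A|B)_\rho\geq 0$. For the converse $(1)\Rightarrow(2)$, I would assume that every conditional entropy is non-negative on $\rho_{AB}$ and specialize this hypothesis to the particular choice $\H=H_{\min}$, which is legitimate precisely because $H_{\min}$ qualifies as a conditional entropy; this yields $H_{\min}(A|B)_\rho\geq 0$, and hence the reduction criterion by the first step.

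Given that Theorem~\ref{thm:neg-cond-ent} is already available, I do not expect a substantial obstacle; the genuine content of the corollary lives entirely in the first step, and everything else is a formal chaining of the lower bound with the extremal role of $H_{\min}$. The only point requiring care is the degenerate case in which $\rho_{AB}$ has support outside that of $I_A\otimes\rho_B$, so that no finite $\lambda$ satisfies $\rho_{AB}\leq\lambda I_A\otimes\rho_B$ and $H_{\min}(A|B)_\rho=-\infty$. In that situation the reduction criterion also fails outright, so the two conditions remain equivalent (both false) and the argument goes through unchanged.
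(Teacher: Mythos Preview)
Your proposal is correct and follows essentially the same route as the paper: both directions are obtained by combining Theorem~\ref{thm:neg-cond-ent} with the fact that $H_{\min}$ is itself a conditional entropy, via the equivalence $H_{\min}(A|B)_\rho\geq 0 \iff I_A\otimes\rho_B\geq\rho_{AB}$. One minor remark: the degenerate support case you flag cannot actually arise here, since $\rho_B=\operatorname{Tr}_A[\rho_{AB}]$ forces $\operatorname{supp}(\rho_{AB})\subseteq\operatorname{supp}(I_A\otimes\rho_B)$, so $\lambda^\star$ is always finite.
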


\begin{proof}
The proof of Corollary~\ref{cor:pos-QCE-reduct} is immediate from Theorem~\ref{thm:neg-cond-ent}. Indeed, if the first condition holds, then the inequality $H_{\min}(A|B)_\rho\geq 0$ holds in particular (since $H_{\min}(A|B)_\rho$ is a conditional entropy), and the second condition is then a consequence of the definition in~\eqref{eq:cond-min-ent-def}. Now suppose that the second condition holds. This then means that there exists $\lambda \leq 1$  such that $\rho_{AB}\leq \lambda I_A\otimes\rho_B$. According to the definition in~\eqref{eq:cond-min-ent-def}, it follows that $H_{\min}(A|B)_\rho\geq 0$ and then Theorem~\ref{thm:neg-cond-ent} implies the first condition.
\end{proof}
\medskip

Suppose that a state $\rho_{AB}$ is separable~\cite{W89}, meaning that it can be written as a probabilistic mixture of product states (i.e., $\rho_{AB} = \sum_x p(x) \sigma^x_A \otimes \tau^x_B$ for some probability distribution $\{p(x)\}_x$ and sets $\{\sigma^x_A\}_x$ and $\{\tau^x_B\}_x$ of states). Then it follows that the second condition in Corollary~\ref{cor:pos-QCE-reduct} holds~\cite{HH99}, so that every conditional entropy is non-negative for all separable states. 

Although all plausible conditional entropies are non-negative on all separable states, this does not mean that the conditional entropies are only non-negative on separable states. In fact, some entangled states have non-negative conditional entropy on \emph{all} choices of conditional entropy functions. In light of this, Corollary~\ref{cor:pos-QCE-reduct} is helpful because it provides a simple sufficient criterion for a bipartite state to have non-negative quantum conditional entropy.

In Appendix~\ref{app:alt-cond-ent}, we discuss a subclass of conditional entropies and report results that are similar to those given in Theorem~\ref{thm:neg-cond-ent} and Corollary~\ref{cor:pos-QCE-reduct}. The difference is that the lower bound in~\eqref{eq:main-thm} is expressed in terms of a different definition of conditional min-entropy and the equivalent condition in Corollary~\ref{cor:pos-QCE-reduct} becomes a variation of the reduction criterion, rather than the reduction criterion itself.

In addition to the inequality~\eqref{eq:main-thm}, one can obtain other entropy inequalities that are based on the formalism developed here. For example, in Appendix~\ref{appG} we show that for every bipartite state $\rho_{AB}$,
\be
\H(AB)_\rho\geq \H(A|B)_{\rho}-\log_2|B|\;.
\ee
This inequality follows from the fact that the teleportation protocol of communicating a quantum system from Bob to Alice is a conditionally mixing channel (see Appendix~\ref{appG} for more details and other entropy inequalities).

\section{Conditional entropy from relative entropy}

In this section, we prove that one can construct a quantum conditional entropy from a quantum relative entropy, based on two simple postulates that define a quantum relative entropy~\cite{GT2021,GT20}. Before giving this construction, let us first recall the definition of a quantum relative entropy~\cite{GT2021,GT20}. 

\begin{definition}[Relative entropy]
A function
\begin{equation}
\D : \bigcup_{A}\{ \mathfrak{D}(A)\times \mathfrak{D}(A) \} \rightarrow \mathbb{R}_+ \cup \{\infty\}
\label{eq:rel-ent-gen-def-1st}
\end{equation}
is a quantum relative entropy if it is not equal to the zero function and satisfies the following postulates for all systems $A$ and $B$, all states $\rho_{A}$, $\sigma_{A}$, $\omega_B$, $\tau_B$, and every channel $\mathcal{N}_{A\to B}$:
\begin{enumerate}
    \item Monotonicity: 
    \begin{equation}
       \D(\rho_A \Vert \sigma_A) \geq \D(\mathcal{N}_{A\to B}(\rho_A) \Vert \mathcal{N}_{A\to B}(\sigma_A)). 
    \label{eq:rel-1st-postulate-mono}   
    \end{equation}
    
    \item Additivity:
    \begin{equation}
    \D(\rho_A \otimes \omega_B \Vert \sigma_A \otimes \tau_B)    = \D(\rho_A \Vert \sigma_A) + \D(\omega_B \Vert \tau_B).
        \label{eq:rel-2nd-postulate-add}
    \end{equation}
    
    \end{enumerate}
\end{definition}

Now we state our final result (see Appendix~\ref{app:proof-cond-ent-from-rel-ent} for a proof):

\begin{theorem}
\label{thm:cond-ent-from-rel-ent}
Let $\D$ be a quantum relative entropy. Then the function 
\begin{equation}
      \mathrm{log}_2 | A | - \D(\rho_{AB} \| \mathbf{u}_{A} \otimes \rho_{B})
    \label{eq:cond-ent-from-rel-ent}
\end{equation}
is a quantum conditional entropy.
\end{theorem}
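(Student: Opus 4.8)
The plan is to check, for the candidate function $\H(A|B)_\rho \coloneqq \log_2|A| - \D(\rho_{AB}\|\u_A\otimes\rho_B)$, the three defining requirements of a quantum conditional entropy: that it is not the zero function, that it is additive on tensor products, and that it reverses the conditional majorization pre-order $\succ_A$. Non-triviality and additivity are quick. Using the basic property $\D(\xi\|\xi)=0$ of any quantum relative entropy, one gets $\H(A|B)_{\u^{(2)}\otimes\rho} = 1 - \D(\u^{(2)}_A\otimes\rho_B\|\u^{(2)}_A\otimes\rho_B) = 1 \neq 0$, so $\H$ is not the zero function (with the correct bit normalization). For additivity, the $BB'$-marginal of $\rho_{AB}\otimes\sigma_{A'B'}$ is $\rho_B\otimes\sigma_{B'}$ and $\u_{AA'}=\u_A\otimes\u_{A'}$, so the reference state factorizes as $(\u_A\otimes\rho_B)\otimes(\u_{A'}\otimes\sigma_{B'})$ after permuting tensor factors; invoking invariance of $\D$ under the (isometric) swap, the additivity postulate \eqref{eq:rel-2nd-postulate-add} for $\D$, and $\log_2|AA'|=\log_2|A|+\log_2|A'|$ yields \eqref{eq:QCE-2nd-postulate-add}.

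The substance is monotonicity, and I would first reduce both cases of $\rho_{AB}\succ_A\sigma_{A'B'}$ to two self-contained facts: (i) $\H$ does not decrease under a single locally balanced channel $\mathcal N_{AB\to AB'}$ with $A$ fixed, and (ii) $\H$ is invariant under a local isometric channel on the conditioned system. Granting these, case~\eqref{eq:cond-maj-def-2} follows by writing $\sigma_{A'B'}=\mathcal V_{A\to A'}(\omega_{AB'})$ with $\omega_{AB'}=\mathcal N(\rho_{AB})$ and chaining $\H(A'|B')_\sigma = \H(A|B')_\omega \geq \H(A|B)_\rho$; case~\eqref{eq:cond-maj-def-3} is analogous. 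Fact (i) is the clean part: since $A$ is unchanged it reduces to $\D(\omega_{AB'}\|\u_A\otimes\omega_{B'}) \leq \D(\rho_{AB}\|\u_A\otimes\rho_B)$, which is the data-processing inequality \eqref{eq:rel-1st-postulate-mono} for $\mathcal N$ provided $\mathcal N(\u_A\otimes\rho_B)=\u_A\otimes\omega_{B'}$. Conditional unitality \eqref{eq:cond-unital-def} already gives $\mathcal N(\u_A\otimes\rho_B)=\u_A\otimes\xi_{B'}$ for some $\xi_{B'}$, and semi-causality pins down $\xi_{B'}=\omega_{B'}$: applying $\operatorname{Tr}_A$ and using $A\not\to B'$ with the replacement channel $\mathcal R_A$ that prepares $\u_A$ (so $(\mathcal R_A\otimes\mathrm{id})(\rho_{AB})=\u_A\otimes\rho_B$) gives $\xi_{B'}=\mathcal N_{AB\to B'}(\u_A\otimes\rho_B)=\mathcal N_{AB\to B'}\circ\mathcal R_A(\rho_{AB})=\mathcal N_{AB\to B'}(\rho_{AB})=\omega_{B'}$.

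The hard part will be fact (ii), because it forces the additive constant $\log_2|A|$ to mesh correctly with a general axiomatic $\D$; indeed this invariance is exactly what makes $\H$ well defined under the embeddings built into $\succ_A$. Writing $P\coloneqq VV^\dagger$ for the isometry $\mathcal V_{A\to A'}$ (a rank-$|A|$ projector, so $P/|A|$ is the uniform state on the image), I would split it in two. First, $\D(\mathcal V(\mu)\|(P/|A|)\otimes\mu_{B'}) = \D(\mu\|\u_A\otimes\mu_{B'})$ by sandwiching the data-processing inequality between $\mathcal V$ and any channel $\mathcal W_{A'\to A}$ with $\mathcal W\circ\mathcal V=\mathrm{id}$ (e.g.\ $\mathcal W(X)=V^\dagger X V + \operatorname{Tr}[(I-P)X]\,\xi_A$), which sends $(P/|A|)\otimes\mu_{B'}\mapsto\u_A\otimes\mu_{B'}$. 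Second, the crux, I need the orthogonal-scaling identity $\D(\tau\|\u_{A'}\otimes\mu_{B'}) = \D(\tau\|(P/|A|)\otimes\mu_{B'}) + \log_2(|A'|/|A|)$ for $\tau=\mathcal V(\mu)$. Since $\u_{A'}\otimes\mu_{B'}=\tfrac{|A|}{|A'|}(P/|A|)\otimes\mu_{B'} + \tfrac{|A'|-|A|}{|A'|}(\bar{P}/(|A'|-|A|))\otimes\mu_{B'}$ is a mixture of two states with orthogonal supports and $\tau$ lives in the first support, I would prove the general lemma $\D(\rho\|\lambda\varsigma+(1-\lambda)\varsigma^\perp)=\D(\rho\|\varsigma)+\D(e_0\|p_\lambda)$ (for $\operatorname{supp}\rho\subseteq\operatorname{supp}\varsigma\perp\operatorname{supp}\varsigma^\perp$, with $e_0$ a point mass and $p_\lambda=(\lambda,1-\lambda)$) by sandwiching data processing between two measure-and-prepare channels: $\Phi$ that reads the block, keeps it on outcome $0$ and prepares $\varsigma$ on outcome $1$ (sending $\theta\mapsto\varsigma\otimes p_\lambda$ and $\rho\mapsto\rho\otimes e_0$), and its converse $\Psi$ that discards the flag and prepares $\varsigma^\perp$ on flag $1$.

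Finally I would evaluate $\D(e_0\|p_\lambda)=-\log_2\lambda$: additivity of $\D$ together with the lemma above (applied to the point mass $e_0\otimes e_0$ inside $p_\lambda\otimes p_\mu$, and $\D(\xi\|\xi)=0$) forces the multiplicative Cauchy equation $\D(e_0\|p_{\lambda\mu})=\D(e_0\|p_\lambda)+\D(e_0\|p_\mu)$, whose nonnegative monotone solutions are $-c\log_2\lambda$; the constant is fixed to $c=1$ by the bit normalization of the relative-entropy framework, equivalently by the embedding invariance of the associated unconditional entropy established earlier. With $\D(e_0\|p_\lambda)=-\log_2\lambda$ the two dimension terms cancel, giving $\H(A'|B')_{\mathcal V(\mu)}=\H(A|B')_\mu$ and hence (ii). Combining (i) and (ii) through the two case reductions then establishes the monotonicity postulate \eqref{eq:QCE-1st-postulate-mono}, and all three requirements hold, so the function in \eqref{eq:cond-ent-from-rel-ent} is a quantum conditional entropy.
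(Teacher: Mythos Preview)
Your proof follows the same overall architecture as the paper's: non-triviality and additivity are immediate from the corresponding properties of $\D$, and monotonicity under $\succ_A$ is split into (i) non-decrease under a locally balanced channel with $A$ fixed and (ii) invariance under a local isometric channel on $A$. Your argument for (i), using conditional unitality to get $\mathcal N(\u_A\otimes\rho_B)=\u_A\otimes\xi_{B'}$ and then semi-causality to identify $\xi_{B'}=\omega_{B'}$ before applying data processing, is exactly the paper's.

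The substantive difference is in step (ii). The paper reduces by a unitary to the embedding $\mathcal V(\rho_{AB})=\rho_{AB}\oplus\mathbf 0_{CB}$, writes $\u_{A'}=t\,\u_A\oplus(1-t)\u_C$ with $t=|A|/|A'|$, and then invokes a direct-sum scaling identity for $\D$ proved in~\cite{Gour2023} to obtain $\D(\rho_{AB}\oplus\mathbf 0\,\|\,\u_{A'}\otimes\rho_B)=\D(\rho_{AB}\,\|\,\u_A\otimes\rho_B)-\log_2 t$, after which the dimension terms cancel. You instead rederive this identity from the two axioms: your lemma $\D(\rho\,\|\,\lambda\varsigma+(1-\lambda)\varsigma^\perp)=\D(\rho\,\|\,\varsigma)+\D(e_0\,\|\,p_\lambda)$, proved by sandwiching data processing between the measure-and-prepare channels $\Phi$ and $\Psi$, is correct, and your reduction of $\lambda\mapsto\D(e_0\,\|\,p_\lambda)$ to the multiplicative Cauchy equation via additivity (with monotonicity supplied by the channel $0\mapsto 0$, $1\mapsto 0$ with probability $q$) is a clean way to finish. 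Your route is more self-contained; the paper's is shorter by appeal to the external reference.

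One caveat you already flag: fixing $c=1$ in $\D(e_0\,\|\,p_\lambda)=-c\log_2\lambda$ genuinely requires a normalization of $\D$ (e.g., $\D(e_0\,\|\,\u^{(2)})=1$), which is not among the two postulates in \eqref{eq:rel-ent-gen-def-1st}--\eqref{eq:rel-2nd-postulate-add}; without it the function fails invariance under isometric embeddings on $A$ and hence \eqref{eq:QCE-1st-postulate-mono}. The paper's proof carries the same dependence, buried inside the cited identity from~\cite{Gour2023}; your version simply makes it explicit.
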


Suitable relative entropies include the Umegaki (standard)~\cite{U62}, Belavkin--Staszewski~\cite{Belavkin1982}, Petz--R\'enyi~\cite{P85,P86}, sandwiched R\'enyi~\cite{MDSFT13,WWY14}, $\alpha$-$z$~\cite{AD15}, and geometric R\'enyi~\cite{M13,Matsumoto2018} relative entropies, for which we point to~\cite{KW20book} for a summary. The construction in~\eqref{eq:cond-ent-from-rel-ent}, for generating a conditional entropy from a relative entropy, has been known for some time, but the difference with our result here is that we prove, starting from the simple postulates for relative entropy in~\eqref{eq:rel-ent-gen-def-1st}--\eqref{eq:rel-2nd-postulate-add}, that the function in~\eqref{eq:cond-ent-from-rel-ent} satisfies the two simple postulates for conditional entropy in~\eqref{eq:cond-ent-gen-def-1st}--\eqref{eq:QCE-2nd-postulate-add}.

\section{Conclusion}

In summary, we have established a physically motivated, minimal framework for defining conditional entropies, based on two simple postulates (non-decrease under conditionally mixing channels and additivity on tensor-product states). We then explored consequences of these postulates, which include reduction to entropy for tensor-product states, invariance under local isometric channels, a lower bound in terms of the conditional min-entropy, and taking on a negative value for the maximally entangled state. We consider the latter property to be the most profound result of this work, and all of our arguments rely on constructing physical processes that respect the first postulate of non-decrease under conditionally mixing channels.

Returning to the point raised in the second paragraph of Section~\ref{sec:summary-results}, one may wonder if it might be possible to replace the additivity postulate of conditional entropy with a weaker criterion. For example, suppose we replace the additivity postulate~\eqref{eq:QCE-2nd-postulate-add} with a weaker postulate of the form
\begin{equation}\label{weakadditivity}
    \H(A^n|B^n)_{\rho^{\otimes n}} = n\H(A|B)_{\rho}\quad\forall n\in\mathbb{N}.
    \end{equation}
In Appendix~\ref{appg} we show that there exist examples of functions that satisfy both the monotonicity postulate and the above weak additivity, but take a non-negative value on \emph{all} states. Therefore, the negativity of conditional-entropies (as defined in this paper) on the maximally entangled state is closely related to the (strong) additivity postulate~\eqref{eq:QCE-2nd-postulate-add}.   

Every relative entropy $\D$ can be used as in~\eqref{eq:cond-ent-from-rel-ent} to define a conditional entropy.
One can also attempt to define another type of conditional entropy of the form
\be\label{36}
\log_2|A|-\min_{\sigma\in\md(B)}\D(\rho_{AB}\|\u_A\otimes\sigma_B)\;.
\ee
While the function above behaves monotonically under conditionally unital channels, and therefore satisfies the monotonicity postulate~\eqref{eq:QCE-1st-postulate-mono} it is not clear if this function is additive under tensor products. It is known to be additive if $\D$ is taken to be the Petz R\'enyi relative entropy~\cite[Lemma~3]{SW13} or the sandwiched R\'enyi relative entropy~\cite{Bei13}, but it is not known if it is additive in general.

From its definition, the set of all conditional entropies is convex. Hence, not every conditional entropy has to take the form of a difference of logarithmic dimension and divergence. For example, if $\D$ is one of the quantum R\'enyi relative entropies, then every convex combination of the functions~\eqref{eq:cond-ent-from-rel-ent} and~\eqref{36} is also a conditional entropy. 
It is an interesting open problem if there exists conditional entropies that cannot be expressed in terms of divergences.
Another interesting open problem is whether or not all quantum conditional entropies are upper bounded by the conditional max-entropy. We leave these open problems to future investigations.

\bigskip 

\textbf{Acknowledgments}.
The authors thank Nilanjana Datta, Kaiyuan Ji, and Henry Pfister for helpful discussions and Jonathan Sorce for feedback on our manuscript. GG and IG acknowledge support from the Natural Sciences and Engineering Research Council of Canada (NSERC). SB acknowledges support from the National Science Foundation (NSF) under Grant Nos.~1908730 and 1910571. MMW acknowledges support from the NSF under Grant No.~1907615. Any opinions, findings, conclusions, and recommendations expressed in this material are those of the authors and do not necessarily reflect the views of these sponsors.

\bigskip

\textbf{Author Contributions}.
The following describes the different contributions of all authors of this work, using roles defined by the CRediT
(Contributor Roles Taxonomy) project~\cite{NISO}:

\medskip 

\noindent \textbf{GG}: Conceptualization, Formal analysis, Investigation, Methodology, Project administration, Supervision, Validation, Writing - original draft, Writing - review \& editing.

\medskip 
\noindent \textbf{MMW}: Formal analysis, Funding acquisition, Investigation, Methodology, Validation, Writing - original draft, Writing - review \& editing.

\medskip 
\noindent \textbf{SB}: Conceptualization, Visualization

\medskip 
\noindent \textbf{IJG}: Conceptualization

\bibliography{ref}

\begin{thebibliography}{10}

\bibitem{epr1935}
Albert Einstein, Boris Podolsky, and Nathan Rosen.
\newblock ``Can quantum-mechanical description of physical reality be considered complete?''.
\newblock \href{https://dx.doi.org/10.1103/PhysRev.47.777}{Physical Review {\bf 47}, 777--780}~(1935).

\bibitem{S35}
Erwin Schr\"{o}dinger.
\newblock ``Discussion of probability relations between separated systems''.
\newblock \href{https://dx.doi.org/10.1017/S0305004100013554}{Proceedings of the Cambridge Philosophical Society {\bf 31}, 555--563}~(1935).

\bibitem{S48}
Claude~E. Shannon.
\newblock ``A mathematical theory of communication''.
\newblock \href{https://dx.doi.org/10.1002/j.1538-7305.1948.tb01338.x}{The Bell System Technical Journal {\bf 27}, 379--423}~(1948).

\bibitem{CA97}
Nicolas~J. Cerf and Christoph Adami.
\newblock ``Negative entropy and information in quantum mechanics''.
\newblock \href{https://dx.doi.org/10.1103/PhysRevLett.79.5194}{Physical Review Letters {\bf 79}, 5194--5197}~(1997).

\bibitem{T05}
Ian Turner.
\newblock ``Scientist knows less than nothing''~(2005).
\newblock Screenshot available at \url{https://twitter.com/markwilde/status/1558125303569846278}.

\bibitem{H17}
Masahito Hayashi.
\newblock ``Quantum information theory: Mathematical foundation''.
\newblock \href{https://dx.doi.org/10.1007/978-3-662-49725-8}{Springer}. ~(2017).
\newblock Second edition.

\bibitem{W17}
Mark~M. Wilde.
\newblock ``Quantum information theory''.
\newblock \href{https://dx.doi.org/10.1017/9781316809976}{Cambridge University Press}. ~(2017).
\newblock Second edition.

\bibitem{Watrous2018}
John Watrous.
\newblock ``The theory of quantum information''.
\newblock \href{https://dx.doi.org/10.1017/9781316848142}{Cambridge University Press}. ~(2018).

\bibitem{H19book}
Alexander~S. Holevo.
\newblock ``Quantum systems, channels, information: A mathematical introduction''.
\newblock \href{https://dx.doi.org/10.1515/9783110642490}{Walter de Gruyter}. ~(2019).
\newblock Second edition.

\bibitem{KW20book}
Sumeet Khatri and Mark~M. Wilde.
\newblock ``Principles of quantum communication theory: A modern approach''~(2024).
\newblock  \href{http://arxiv.org/abs/2011.04672v2}{arXiv:2011.04672v2}.

\bibitem{Horodecki2005}
Micha{\l} Horodecki, Jonathan Oppenheim, and Andreas Winter.
\newblock ``Partial quantum information''.
\newblock \href{https://dx.doi.org/10.1038/nature03909}{Nature {\bf 436}, 673--676}~(2005).

\bibitem{Horodecki:2007:107}
Michal Horodecki, Jonathan Oppenheim, and Andreas Winter.
\newblock ``Quantum state merging and negative information''.
\newblock \href{https://dx.doi.org/10.1007/s00220-006-0118-x}{Communications in Mathematical Physics {\bf 269}, 107--136}~(2007).

\bibitem{BCCRR10}
Mario Berta, Matthias Christandl, Roger Colbeck, Joseph~M. Renes, and Renato Renner.
\newblock ``The uncertainty principle in the presence of quantum memory''.
\newblock \href{https://dx.doi.org/10.1038/nphys1734}{Nature Physics {\bf 6}, 659--662}~(2010).

\bibitem{Rio2011}
Lidia~del Rio, Johan Aberg, Renato Renner, Oscar Dahlsten, and Vlatko Vedral.
\newblock ``The thermodynamic meaning of negative entropy''.
\newblock \href{https://dx.doi.org/10.1038/nature10123}{Nature {\bf 474}, 61--63}~(2011).

\bibitem{Renner2005}
Renato Renner.
\newblock ``Security of quantum key distribution''.
\newblock PhD thesis.
\newblock ETH Zurich.
\newblock ~(2005).
\newblock  \href{http://arxiv.org/abs/quant-ph/0512258}{arXiv:quant-ph/0512258}.

\bibitem{KRS09}
Robert Koenig, Renato Renner, and Christian Schaffner.
\newblock ``The operational meaning of min- and max-entropy''.
\newblock \href{https://dx.doi.org/10.1109/TIT.2009.2025545}{IEEE Transactions on Information Theory {\bf 55}, 4337--4347}~(2009).

\bibitem{TCR09}
Marco Tomamichel, Roger Colbeck, and Renato Renner.
\newblock ``A fully quantum asymptotic equipartition property''.
\newblock \href{https://dx.doi.org/10.1109/TIT.2009.2032797}{IEEE Transactions on Information Theory {\bf 55}, 5840--5847}~(2009).

\bibitem{MDSFT13}
Martin {M\"uller}-Lennert, Fr\'ed\'eric Dupuis, Oleg Szehr, Serge Fehr, and Marco Tomamichel.
\newblock ``On quantum {R\'enyi} entropies: a new generalization and some properties''.
\newblock \href{https://dx.doi.org/10.1063/1.4838856}{Journal of Mathematical Physics {\bf 54}, 122203}~(2013).
\newblock  \href{http://arxiv.org/abs/1306.3142}{arXiv:1306.3142}.

\bibitem{TBH13}
Marco Tomamichel, Mario Berta, and Masahito Hayashi.
\newblock ``Relating different quantum generalizations of the conditional {R\'enyi} entropy''.
\newblock \href{https://dx.doi.org/10.1063/1.4892761}{Journal of Mathematical Physics {\bf 55}, 082206}~(2014).
\newblock  \href{http://arxiv.org/abs/1311.3887}{arXiv:1311.3887}.

\bibitem{HT16}
Masahito Hayashi and Marco Tomamichel.
\newblock ``Correlation detection and an operational interpretation of the {R}ényi mutual information''.
\newblock \href{https://dx.doi.org/10.1063/1.4964755}{Journal of Mathematical Physics {\bf 57}, 102201}~(2016).

\bibitem{LY1999}
Elliott~H. Lieb and Jakob Yngvason.
\newblock ``The physics and mathematics of the second law of thermodynamics''.
\newblock \href{https://dx.doi.org/https://doi.org/10.1016/S0370-1573(98)00082-9}{Physics Reports {\bf 310}, 1--96}~(1999).

\bibitem{LY1998}
Elliott~H. Lieb and Jakob Yngvason.
\newblock ``A guide to entropy and the second law of thermodynamics''.
\newblock \href{https://dx.doi.org/10.1007/978-3-662-10018-9_19}{Pages 353--363}.
\newblock Springer Berlin Heidelberg. Berlin, Heidelberg~(2004).
\newblock  \href{http://arxiv.org/abs/math-ph/9805005}{arXiv:math-ph/9805005}.

\bibitem{GT2021}
Gilad Gour and Marco Tomamichel.
\newblock ``Entropy and relative entropy from information-theoretic principles''.
\newblock \href{https://dx.doi.org/10.1109/TIT.2021.3078337}{IEEE Transactions on Information Theory {\bf 67}, 6313--6327}~(2021).

\bibitem{HH99}
Micha\l{} Horodecki and Pawe\l{} Horodecki.
\newblock ``Reduction criterion of separability and limits for a class of distillation protocols''.
\newblock \href{https://dx.doi.org/10.1103/PhysRevA.59.4206}{Physical Review A {\bf 59}, 4206--4216}~(1999).

\bibitem{Chitambar_2019}
Eric Chitambar and Gilad Gour.
\newblock ``Quantum resource theories''.
\newblock \href{https://dx.doi.org/10.1103/revmodphys.91.025001}{Reviews of Modern Physics {\bf 91}, 025001}~(2019).

\bibitem{HHH+2003}
Micha\l{} Horodecki, Karol Horodecki, Pawe\l{} Horodecki, Ryszard Horodecki, Jonathan Oppenheim, Aditi Sen(De), and Ujjwal Sen.
\newblock ``Local information as a resource in distributed quantum systems''.
\newblock \href{https://dx.doi.org/10.1103/PhysRevLett.90.100402}{Physical Review Letters {\bf 90}, 100402}~(2003).

\bibitem{HHO2003}
Micha\l{} Horodecki, Pawe\l{} Horodecki, and Jonathan Oppenheim.
\newblock ``Reversible transformations from pure to mixed states and the unique measure of information''.
\newblock \href{https://dx.doi.org/10.1103/PhysRevA.67.062104}{Physical Review A {\bf 67}, 062104}~(2003).

\bibitem{BGG22}
Sarah Brandsen, Isabelle~Jianing Geng, and Gilad Gour.
\newblock ``What is entropy? {A} perspective from games of chance''.
\newblock \href{https://dx.doi.org/10.1103/PhysRevE.105.024117}{Physical Review E {\bf 105}, 024117}~(2022).

\bibitem{U71}
Armin Uhlmann.
\newblock ``S\"atze \"uber dichtematrizen''.
\newblock Wissenschaftliche Zeitschrift der Karl-Marx-Universitat Leipzig. Mathematisch-naturwissenschaftliche Reihe {\bf 20}, 633--653~(1971).

\bibitem{U72}
Armin Uhlmann.
\newblock ``Endlich-dimensionale dichtematrizen {I}''.
\newblock Wissenschaftliche Zeitschrift der Karl-Marx-Universitat Leipzig. Mathematisch-naturwissenschaft-liche Reihe {\bf 21}, 421--452~(1972).

\bibitem{U73}
Armin Uhlmann.
\newblock ``Endlich-dimensionale dichtematrizen {II}''.
\newblock Wissenschaftliche Zeitschrift der Karl-Marx-Universitat Leipzig Mathematisch-naturwissenschaft-liche Reihe {\bf 22}, 139--177~(1973).

\bibitem{BHNOW15}
Fernando G. S.~L. Brand\~{a}o, Michal Horodecki, Nelly Ng, Jonathan Oppenheim, and Stephanie Wehner.
\newblock ``The second laws of quantum thermodynamics''.
\newblock \href{https://dx.doi.org/10.1073/pnas.1411728112}{Proceedings of the National Academy of Sciences {\bf 112}, 3275--3279}~(2015).

\bibitem{BGWG21}
Sarah Brandsen, Isabelle~J. Geng, Mark~M. Wilde, and Gilad Gour.
\newblock ``Quantum conditional entropy from information-theoretic principles''~(2021).
\newblock  \href{http://arxiv.org/abs/2110.15330}{arXiv:2110.15330}.

\bibitem{Vempati2022unitaloperations}
Mahathi Vempati, Saumya Shah, Nirman Ganguly, and Indranil Chakrabarty.
\newblock ``A-unital {O}perations and {Q}uantum {C}onditional {E}ntropy''.
\newblock \href{https://dx.doi.org/10.22331/q-2022-02-02-641}{{Quantum} {\bf 6}, 641}~(2022).

\bibitem{PR22}
Christopher Portmann and Renato Renner.
\newblock ``Security in quantum cryptography''.
\newblock \href{https://dx.doi.org/10.1103/RevModPhys.94.025008}{Reviews of Modern Physics {\bf 94}, 025008}~(2022).

\bibitem{Beckman_2001}
David Beckman, Daniel Gottesman, Michael~A. Nielsen, and John Preskill.
\newblock ``Causal and localizable quantum operations''.
\newblock \href{https://dx.doi.org/10.1103/physreva.64.052309}{Physical Review A {\bf 64}, 052309}~(2001).

\bibitem{ESW02}
T.~Eggeling, D.~Schlingemann, and Reinhard~F. Werner.
\newblock ``Semicausal operations are semilocalizable''.
\newblock \href{https://dx.doi.org/10.1209/epl/i2002-00579-4}{Europhysics Letters {\bf 57}, 782--788}~(2002).

\bibitem{Piani_2006}
Marco Piani, Michal Horodecki, Pawel Horodecki, and Ryszard Horodecki.
\newblock ``Properties of quantum nonsignaling boxes''.
\newblock \href{https://dx.doi.org/10.1103/physreva.74.012305}{Physical Review A {\bf 74}, 012305}~(2006).

\bibitem{GGHKLV18}
Gilad Gour, A.~Grudka, M.~Horodecki, W.~Klobus, J.~Lodyga, and V.~Narasimhachar.
\newblock ``The conditional uncertainty principle''.
\newblock \href{https://dx.doi.org/10.1103/PhysRevA.97.042130}{Physical Review A {\bf 97}, 042130}~(2018).

\bibitem{W89}
Reinhard~F. Werner.
\newblock ``Quantum states with {Einstein-Podolsky-Rosen} correlations admitting a hidden-variable model''.
\newblock \href{https://dx.doi.org/10.1103/PhysRevA.40.4277}{Physical Review A {\bf 40}, 4277--4281}~(1989).

\bibitem{GT20}
Gilad Gour and Marco Tomamichel.
\newblock ``Optimal extensions of resource measures and their applications''.
\newblock \href{https://dx.doi.org/10.1103/physreva.102.062401}{Physical Review A {\bf 102}, 062401}~(2020).

\bibitem{U62}
Hisaharu Umegaki.
\newblock ``Conditional expectations in an operator algebra {IV} (entropy and information)''.
\newblock \href{https://dx.doi.org/10.2996/kmj/1138844604}{Kodai Mathematical Seminar Reports {\bf 14}, 59--85}~(1962).

\bibitem{Belavkin1982}
V.~P. Belavkin and P.~Staszewski.
\newblock ``C*-algebraic generalization of relative entropy and entropy''.
\newblock Annales de l'I.H.P. Physique th\'eorique {\bf 37}, 51--58~(1982).
\newblock  url:~\url{http://www.numdam.org/item/AIHPA_1982__37_1_51_0/}.

\bibitem{P85}
D\'enes Petz.
\newblock ``{Quasi-entropies for States of a von Neumann Algebra}''.
\newblock \href{https://dx.doi.org/10.2977/prims/1195178929}{Publications of the Research Institute for Mathematical Sciences {\bf 21}, 787--800}~(1985).

\bibitem{P86}
D\'enes Petz.
\newblock ``Quasi-entropies for finite quantum systems''.
\newblock \href{https://dx.doi.org/10.1016/0034-4877(86)90067-4}{Reports in Mathematical Physics {\bf 23}, 57--65}~(1986).

\bibitem{WWY14}
Mark~M. Wilde, Andreas Winter, and Dong Yang.
\newblock ``Strong converse for the classical capacity of entanglement-breaking and {Hadamard} channels via a sandwiched {R\'enyi} relative entropy''.
\newblock \href{https://dx.doi.org/10.1007/s00220-014-2122-x}{Communications in Mathematical Physics {\bf 331}, 593--622}~(2014).

\bibitem{AD15}
Koenraad M.~R. Audenaert and Nilanjana Datta.
\newblock ``$\alpha$-$z$-{R\'enyi} relative entropies''.
\newblock \href{https://dx.doi.org/10.1063/1.4906367}{Journal of Mathematical Physics {\bf 56}, 022202}~(2015).
\newblock  \href{http://arxiv.org/abs/1310.7178}{arXiv:1310.7178}.

\bibitem{M13}
Keiji Matsumoto.
\newblock ``A new quantum version of $f$-divergence''~(2013).
\newblock  \href{http://arxiv.org/abs/1311.4722}{arXiv:1311.4722}.

\bibitem{Matsumoto2018}
Keiji Matsumoto.
\newblock ``A new quantum version of f-divergence''.
\newblock In Masanao Ozawa, Jeremy Butterfield, Hans Halvorson, Mikl{\'o}s R{\'e}dei, Yuichiro Kitajima, and Francesco Buscemi, editors, Reality and Measurement in Algebraic Quantum Theory.
\newblock \href{https://dx.doi.org/10.1007/978-981-13-2487-1_10}{Volume 261, pages 229--273}.
\newblock Singapore~(2018). Springer Singapore.

\bibitem{SW13}
Naresh Sharma and Naqueeb~Ahmad Warsi.
\newblock ``Fundamental bound on the reliability of quantum information transmission''.
\newblock \href{https://dx.doi.org/10.1103/PhysRevLett.110.080501}{Physical Review Letters {\bf 110}, 080501}~(2013).

\bibitem{Bei13}
Salman Beigi.
\newblock ``Sandwiched {R\'enyi} divergence satisfies data processing inequality''.
\newblock \href{https://dx.doi.org/10.1063/1.4838855}{Journal of Mathematical Physics {\bf 54}, 122202}~(2013).
\newblock  \href{http://arxiv.org/abs/1306.5920}{arXiv:1306.5920}.

\bibitem{NISO}
NISO.
\newblock ``Credit – contributor roles taxonomy''.
\newblock \url{https://credit.niso.org/}, Accessed 2024-09-04.

\bibitem{W18thesis}
Xin Wang.
\newblock ``Semidefinite optimization for quantum information''.
\newblock PhD thesis.
\newblock University of Technology Sydney.
\newblock Centre for Quantum Software and Information, Faculty of Engineering and Information Technology~(2018).
\newblock  url:~\url{http://hdl.handle.net/10453/127996}.

\bibitem{Gour5}
Gilad {Gour}.
\newblock ``Comparison of quantum channels by superchannels''.
\newblock \href{https://dx.doi.org/10.1109/TIT.2019.2907989}{IEEE Transactions on Information Theory {\bf 65}, 5880--5904}~(2019).

\bibitem{SDGWW21}
Robert Salzmann, Nilanjana Datta, Gilad Gour, Xin Wang, and Mark~M. Wilde.
\newblock ``Symmetric distinguishability as a quantum resource''.
\newblock \href{https://dx.doi.org/10.1088/1367-2630/ac14aa}{New Journal of Physics {\bf 23}, 083016}~(2021).

\bibitem{KR11}
Robert K\"onig and Renato Renner.
\newblock ``Sampling of min-entropy relative to quantum knowledge''.
\newblock \href{https://dx.doi.org/10.1109/TIT.2011.2146730}{IEEE Transactions on Information Theory {\bf 57}, 4760--4787}~(2011).

\bibitem{Gour2023}
Gilad Gour.
\newblock ``Resources of the quantum world ({V}olume~1)''~(2024).
\newblock  \href{http://arxiv.org/abs/2402.05474v1}{arXiv:2402.05474v1}.

\bibitem{D09}
Nilanjana Datta.
\newblock ``Min- and max-relative entropies and a new entanglement monotone''.
\newblock \href{https://dx.doi.org/10.1109/TIT.2009.2018325}{IEEE Transactions on Information Theory {\bf 55}, 2816--2826}~(2009).

\end{thebibliography}
\bibliographystyle{quantum}


\appendix

\section{Proofs of properties of entropy}

\label{app:entropy-props}

Let us employ the postulates in~\eqref{eq:entropy-def-1}--\eqref{eq:entropy-def-3} to derive some basic properties of every entropy. The proofs given here are similar to those given for the classical case in~\cite[Lemma~3]{GT2021}. First, let us recall an often overlooked fact, that the number 1 is a trivial quantum state of dimension one. Indeed, the underlying Hilbert space is of dimension one, so that the only possible density ``operator'' for this space is the number 1. We now prove that the entropy of this state is equal to zero, as a direct consequence of the additivity postulate. Consider that, for a state $\rho$,
\begin{equation}
\H(\rho) = \H(\rho \otimes 1) = \H(\rho) + \H(1)  
 \ \Rightarrow \   \H(1)  = 0. \label{eq:entr-1-is-0}
\end{equation}
The first equality follows as a trivial assertion, because $\rho = \rho \otimes 1$. The next equality is a consequence of the additivity postulate. The conclusion that $\H(1) = 0$ then follows by canceling $\H(\rho)$ on both sides of the equality.

It follows that the entropy of every pure state $|\psi\rangle\!\langle\psi|$ is equal to zero, as a consequence of~\eqref{eq:entr-1-is-0}. Indeed, the state vector $|\psi\rangle$ is an isometry from the trivial Hilbert space of dimension one to the Hilbert space that contains~$|\psi\rangle$. As such, it follows that $1 \succ |\psi\rangle\!\langle\psi|$ and $1 \prec |\psi\rangle\!\langle\psi|$ according to the definitions in~\eqref{eq:final-maj-def}--\eqref{eq:iso-conditions-majorization} and~\eqref{eq:entropy-def-1}--\eqref{eq:entropy-def-3}. So we conclude that
\begin{equation}
\H(|\psi\rangle\!\langle\psi|)=0 \label{eq:pure-state-zero-ent}
\end{equation}
for every pure state $|\psi\rangle\!\langle\psi|$.

Next we conclude that the entropy of every state is non-negative. This follows because there exists a unital channel taking a pure state $|\psi\rangle\!\langle\psi|$ to an arbitrary state~$\rho$, so that $|\psi\rangle\!\langle\psi| \succ \rho$. Indeed, let $\rho = \sum_x p_x |\psi_x\rangle\!\langle \psi_x|$ be a pure state decomposition of $\rho$. Let $U_x$ be a unitary that transforms $|\psi\rangle$ to $|\psi_x\rangle$. Thus, $(\cdot) \to \sum_x p_x U_x (\cdot) U_x^\dag$ is a unital channel that transforms $|\psi\rangle\!\langle\psi|$ to $\rho$. By applying~\eqref{eq:entropy-def-2} and~\eqref{eq:pure-state-zero-ent}, we conclude that
\begin{equation}
\H(\rho) \geq 0
\label{eq:all-states-non-neg}
\end{equation}
for every state $\rho$.

For a fixed dimension $d$, the uniform state $\u^{(d)} = I / d$ is the state of maximal entropy, which follows from~\eqref{eq:entropy-def-2} and because every $d$-dimensional state $\rho$ can be taken to $\u^{(d)}$ by means of the unital channel $(\cdot) \to \operatorname{Tr}[(\cdot)] \u^{(d)}$. Thus,
\begin{equation}
\H(\u^{(d)}) \geq \H(\rho)
\label{eq:uniform-largest-ent}
\end{equation}
for every state $\rho$ of dimension $d$.

Let us prove that $\H(\u^{(2)}) > 0$. By the inequality in~\eqref{eq:all-states-non-neg} and the assumption that entropy is not equal to the zero function, there exists a state $\rho$ of some dimension~$d$ such that $\H(\rho) > 0$. By applying~\eqref{eq:uniform-largest-ent}, we conclude that $\H(\u^{(d)}) \geq \H(\rho)$. By taking $n$ sufficiently large (i.e., such that $2^n > d$), we conclude that $\u^{(d)} \succ (\u^{(2)})^{\otimes n}$, which implies that $\H((\u^{(2)})^{\otimes n}) > \H(\u^{(d)})$. Finally, by applying the additivity postulate, we conclude that $ n \H(\u^{(2)}) = \H((\u^{(2)})^{\otimes n})$. Combining all inequalities, we conclude that $\H(\u^{(2)}) > 0$.

After establishing the normalization convention that $\H(\u^{(2)})=1$, the proof that $\H(\u^{(d)}) = \log_2 d$ is the same as that given in the proof of~\cite[Lemma~3]{GT2021}.

\section{Choi matrix formulations}

In this appendix, we provide Choi matrix formulations of conditionally unital channels and conditionally mixing channels. These formulations are useful conceptually and also for numerical optimization if one wishes to use semi-definite programming to optimize over these sets of channels. See~\cite{Watrous2018,W18thesis,KW20book} for examples and discussions of semi-definite programming in quantum information.

Before giving these formulations, let us first recall the definition of the Choi matrix. The Choi matrix $J^{\mathcal{M}}_{CD}$ of a linear map $\mathcal{M}_{C \rightarrow D}$ is defined by its action on the maximally entangled operator, as follows:
\begin{align}
    J^{\mathcal{M}}_{CD} \coloneqq \sum_{i, j =1}^{|C|} |i\rangle\!\langle j| \otimes \mathcal{M}(|i\rangle\!\langle j| ).
\end{align}
In general, the matrix $J^{\mathcal{M}}_{CD}$ is the Choi matrix of a trace-preserving map  if and only if
\begin{equation}
\operatorname{Tr}_{D}[J^{\mathcal{M}}_{CD}] = I_C,
    \label{eq:Choi-TP}
\end{equation}
and it is the Choi matrix of a completely positive map if and only if
\begin{equation}
    J^{\mathcal{M}}_{CD} \geq 0.
        \label{eq:Choi-CP}
\end{equation}
Thus, $J^{\mathcal{M}}_{CD}$ is the Choi matrix of a quantum channel if and only if~\eqref{eq:Choi-TP} and~\eqref{eq:Choi-CP} both hold.

\subsection{Choi matrix formulation of conditional unitality}

\label{app:Choi-cond-unital}

Here we prove that a conditionally unital channel has an equivalent formulation in terms of its Choi matrix. We then see that a channel being conditionally unital represents a semi-definite constraint on its Choi matrix, which is convenient for optimizing over such channels using semi-definite programming.

\begin{lemma}
\label{lem:cond-unital-choi}
A channel
$\mathcal{N}_{AB \rightarrow \tilde{A}B'}$ is conditionally unital if and only if its Choi matrix $J^{\mathcal{N}}_{AB\tilde{A}B'}$ satisfies 
\begin{equation}
    J^{\mathcal{N}}_{B \tilde{A} B'} = J^{\mathcal{N}}_{BB'} \otimes \mathbf{u}_{\tilde{A}},
    \label{eq:choi-matrix-cond-unital}
\end{equation}
where
\begin{align}
J^{\mathcal{N}}_{B \tilde{A} B'} & \coloneqq \operatorname{Tr}_{A}[J^{\mathcal{N}}_{AB\tilde{A}B'}],\\
J^{\mathcal{N}}_{BB'} & \coloneqq  \operatorname{Tr}_{A\tilde{A}}[J^{\mathcal{N}}_{AB\tilde{A}B'}],
\end{align}
are marginals of $J^{\mathcal{N}}_{AB\tilde{A}B'}$ and we employ this kind of shorthand in what follows.
\end{lemma}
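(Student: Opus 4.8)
The plan is to reduce the bipartite condition to a statement about a single-input reduced channel and then invoke the correspondence between channels and their Choi matrices. First I would introduce the channel $\mathcal{N}'_{B\to\tilde{A}B'}$ defined by $\mathcal{N}'(\omega_B)\coloneqq\mathcal{N}_{AB\to\tilde{A}B'}(\mathbf{u}_A\otimes\omega_B)$, which is completely positive and trace preserving since it is the composition of appending the fixed state $\mathbf{u}_A$ with the channel $\mathcal{N}$. The key elementary computation is that tracing out the input register $A$ collapses it to $I_A$, so that $\operatorname{Tr}_A[J^{\mathcal{N}}_{AB\tilde{A}B'}]=|A|\,J^{\mathcal{N}'}_{B\tilde{A}B'}$; a further partial trace over $\tilde{A}$ then gives $J^{\mathcal{N}}_{BB'}=|A|\,J^{\mathcal{N}'}_{BB'}$. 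With these identities, the claimed Choi equation \eqref{eq:choi-matrix-cond-unital} is, after cancelling the common factor $|A|$, equivalent to $J^{\mathcal{N}'}_{B\tilde{A}B'}=\mathbf{u}_{\tilde{A}}\otimes J^{\mathcal{N}'}_{BB'}$. Moreover, unwinding the definition \eqref{eq:cond-unital-def}, $\mathcal{N}$ is conditionally unital precisely when $\mathcal{N}'(\omega_B)$ has the product form $\mathbf{u}_{\tilde{A}}\otimes\sigma_{B'}$ for every state $\omega_B$, so the entire lemma reduces to proving this last equivalence for $\mathcal{N}'$.

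For the forward direction I would first extend the product-output property from states to arbitrary operators by linearity: since density operators span the operator space, $\mathcal{N}'(X_B)=\mathbf{u}_{\tilde{A}}\otimes\operatorname{Tr}_{\tilde{A}}[\mathcal{N}'(X_B)]$ for every operator $X_B$. Applying this with $X_B=|i\rangle\!\langle j|_B$ and substituting into the definition of the Choi matrix immediately factors out $\mathbf{u}_{\tilde{A}}$, yielding $J^{\mathcal{N}'}_{B\tilde{A}B'}=\mathbf{u}_{\tilde{A}}\otimes J^{\mathcal{N}'}_{BB'}$, where the second tensor factor is recognized as the Choi matrix of $\operatorname{Tr}_{\tilde{A}}\circ\mathcal{N}'$. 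For the reverse direction I would use that a channel is uniquely determined by its Choi matrix: if $J^{\mathcal{N}'}_{B\tilde{A}B'}$ factors as $\mathbf{u}_{\tilde{A}}\otimes J^{\mathcal{N}'}_{BB'}$, then $\mathcal{N}'$ coincides with the channel that first applies the $B\to B'$ map with Choi matrix $J^{\mathcal{N}'}_{BB'}$ and then appends $\mathbf{u}_{\tilde{A}}$; this map manifestly sends every state $\omega_B$ to $\mathbf{u}_{\tilde{A}}\otimes\sigma_{B'}$, so $\mathcal{N}$ is conditionally unital. Concretely, one can make this explicit through the action-from-Choi formula $\mathcal{N}'(\omega_B)=\operatorname{Tr}_B[(\omega_B^{T}\otimes I_{\tilde{A}B'})\,J^{\mathcal{N}'}_{B\tilde{A}B'}]$, into which the factorized Choi matrix is substituted.

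The routine but error-prone parts I expect are the bookkeeping of the $|A|$ normalization factor and the transpose in the action-from-Choi formula, together with the implicit reordering of the $B$, $\tilde{A}$, $B'$ tensor factors so that $\mathbf{u}_{\tilde{A}}\otimes J^{\mathcal{N}'}_{BB'}$ matches the ordering in \eqref{eq:choi-matrix-cond-unital}. The genuine conceptual step, and the one I would take most care with, is the reverse direction: it rests on the Choi--Jamio\l{}kowski bijection to conclude that the factorized Choi matrix forces $\mathcal{N}'$ \emph{itself} to have the append-uniform structure on all inputs, rather than merely constraining its outputs on some subset of inputs. Everything else is linear algebra following directly from the definition of the Choi matrix recalled above.
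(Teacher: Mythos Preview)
Your proposal is correct and follows essentially the same approach as the paper: both directions hinge on the action-from-Choi formula together with the observation that density operators span the full operator space, so the product-output condition extends linearly to all inputs. Your packaging via the auxiliary channel $\mathcal{N}'_{B\to\tilde{A}B'}$ is a clean organizational device that the paper does not introduce explicitly, but the underlying computations---tracing out $A$ to collapse onto $I_A$, using linearity to pass from states to arbitrary operators, and reading off the factorization of the reduced Choi matrix---are the same.
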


\begin{proof}
We begin by proving that the channel $\mathcal{N}_{AB \rightarrow \tilde{A}B'}$ is conditionally unital if its Choi matrix has the form in~\eqref{eq:choi-matrix-cond-unital}. To this end, let $\rho_B\in \mathfrak{D}(B)$ and  consider that
\begin{align}
    & \mathcal{N}\left(\mathbf{u} _{A} \otimes \rho_{B}\right) \notag \\
    &= \operatorname{Tr}_{AB}\!\left[ J^{\mathcal{N}}_ {AB\tilde{A}B'} \left( \mathbf{u}_{A} \otimes \left(\rho_{B}\right)^{T} \otimes I_{\tilde{A}B'} \right)\right] \\
    &= \frac{1}{|A|} \operatorname{Tr}_{B}\!\left[ J^{\mathcal{N}}_{B\tilde{A}B'} \left( \left(\rho_{B}\right)^{T} \otimes I_{\tilde{A}B'} \right)\right] \\
    & = \frac{1}{|A|} \operatorname{Tr}_{B}\!\left[\mathbf{u}_{\tilde{A}} \otimes  J^{\mathcal{N}}_{BB'} \left( \left(\rho_{B}\right)^{T} \otimes I_{\tilde{A}B'} \right)\right] \\
    & = \mathbf{u}_{\tilde{A}} \otimes \frac{1}{|A|} \operatorname{Tr}_{B}\!\left[  J^{\mathcal{N}}_{BB'} \left( \left(\rho_{B}\right)^{T} \otimes I_{B'} \right)\right] .
\end{align}
The first equality is a well known identity in quantum information, indicating how the action of a channel on an input state can be understood in terms of a propagation formula using its Choi matrix (see Eq.~(2.66) of~\cite{Watrous2018}). 
The third equality above follows from applying~\eqref{eq:choi-matrix-cond-unital}, which holds by assumption.
It then follows that $\frac{1}{|A|} \operatorname{Tr}_{B}\!\left[  J^{\mathcal{N}}_{BB'} \left( \left(\rho_{B}\right)^{T} \otimes I_{B'} \right)\right]$ is a state on the output system $B'$ (i.e., the choice for the state $\sigma_{B'}$ in~\eqref{eq:cond-unital-def}) because $\frac{1}{|A|}   J^{\mathcal{N}}_{BB'}$ is the Choi operator of the channel
\begin{equation}
\tau_B \to (\operatorname{Tr}_{\tilde{A}} \circ \mathcal{N}_{AB\to\tilde{A}B'})(\mathbf{u}_{A} \otimes \tau_B ).
\end{equation}

We now prove that the Choi matrix of $\mathcal{N}_{AB\rightarrow\tilde
{A}B^{\prime}}$ has the form in~\eqref{eq:choi-matrix-cond-unital} if
$\mathcal{N}_{AB\rightarrow\tilde{A}B^{\prime}}$ is conditionally unital.
Recall from~\eqref{eq:cond-unital-def} that the defining property of a conditionally unital channel is that
for every state $\rho_{B}$, there exists a state~$\sigma_{B^{\prime}}$ such
that%
\begin{equation}\label{eq14}
\mathcal{N}_{AB\to\tA B'}(\mathbf{u}_{A}\otimes\rho_{B})=\mathbf{u}_{\tilde{A}}\otimes
\sigma_{B^{\prime}}.
\end{equation}
By taking the trace over $\tA$ on both sides of the equation above, we find that
\begin{align}
\sigma_{B^{\prime}}&=\mathcal{N}_{AB\to B'}(\mathbf{u}_{A}\otimes\rho_{B})\\
&=\tr_{AB}\!\left[  J_{ABB^{\prime}}^{\mathcal{N}%
}\left(  \mathbf{u}_{A}\otimes\left(  \rho_{B}\right)  ^{T}\otimes
I_{B^{\prime}}\right)  \right]\\
&=\frac1{|A|}\tr_{B}\!\left[  J_{BB^{\prime}}^{\mathcal{N}%
}\left( \left(  \rho_{B}\right)  ^{T}\otimes
I_{B^{\prime}}\right)  \right]\;.
\end{align}
On the other hand, observe that
\begin{align}
&  \mathcal{N}_{AB\to\tA B'}(\mathbf{u}_{A}\otimes\rho_{B})\nonumber\\
&  =\operatorname{Tr}_{AB}\!\left[  J_{AB\tilde{A}B^{\prime}}^{\mathcal{N}
}\left(  \mathbf{u}_{A}\otimes\left(  \rho_{B}\right)  ^{T}\otimes
I_{\tilde{A}B^{\prime}}\right)  \right]  \\
&  =\frac{1}{|A|}\operatorname{Tr}_{B}\!\left[  J_{B\tilde{A}B^{\prime}
}^{\mathcal{N}}\left(  \left(  \rho_{B}\right)  ^{T}\otimes I_{\tilde
{A}B^{\prime}}\right)  \right]  \;.
\end{align}
Therefore, from the two expressions above for $\sigma_B$ and $\mathcal{N}(\mathbf{u}_{A}\otimes\rho_{B})$, we conclude that~\eqref{eq14} can be expressed as
\begin{multline}\label{eq20}
\operatorname{Tr}_{B}\!\left[  J_{B\tilde{A}B^{\prime}
}^{\mathcal{N}}\left(  \left(  \rho_{B}\right)  ^{T}\otimes I_{\tilde
{A}B^{\prime}}\right)  \right]
\\
= \tr_{B}\!\left[ (\u_{\tA}\otimes J_{BB^{\prime}}^{\mathcal{N}%
})\left( \left(  \rho_{B}\right)  ^{T}\otimes
I_{\tA B^{\prime}}\right)  \right]\;,
\end{multline}
holding for all $\rho_B \in \mathfrak{D}(B)$.
Let
\begin{equation}
\eta_{B\tA B'}\eqdef J_{B\tilde{A}B^{\prime}
}^{\mathcal{N}}-\u_{\tA}\otimes J_{BB^{\prime}}^{\mathcal{N}},
\end{equation}
and observe that~\eqref{eq20} can be written as
\begin{equation}
\label{eq21}
\tr_B\!\left[\eta_{B\tA B'}\left(\left(  \rho_{B}\right)  ^{T}\otimes
I_{\tA B^{\prime}}\right)\right]=0\;\;\quad\forall\;\rho_B\in\mathfrak{D}(B).
\end{equation}
Due to the existence of bases of density operators that span the space of
linear operators acting on ${B}$ (see Example~2.7 of~\cite{Watrous2018}),
we conclude from~\eqref{eq21} that for every Hermitian
operator $X_{B}\in\mathfrak{L}(B)$, we have
\begin{equation}
\operatorname{Tr}_{B}\!\left[  \eta_{B\tilde{A}B^{\prime}%
}\left(  X_{B}\otimes I_{\tilde{A}B^{\prime}}\right)  \right]
=0\;.
\end{equation}
Note that by multiplying both sides of the equation above by an arbitrary element $Y_{\tA B'}\in\mathfrak{L}(\tA B')$ and taking the trace, we get that
\begin{equation}
\tr\big[  \eta_{B\tilde{A}B^{\prime}%
}\left(  X_{B}\otimes Y_{\tilde{A}B^{\prime}}\right)  \big]
=0\;.
\end{equation}
Since the equation above holds for all $X_B\in\mathfrak{L}(B)$ and all $Y_{\tA B'}\in\mathfrak{L}(\tA B')$ it also holds for all linear combinations of matrices of the form $X_{B}\otimes Y_{\tilde{A}B^{\prime}}$. Since matrices of the form $X_{B}\otimes Y_{\tilde{A}B^{\prime}}$ span the whole space $\mathfrak{L}(B\tA B')$ we conclude that
\begin{equation}
\tr\big[  \eta_{B\tilde{A}B^{\prime}%
}W_{B\tilde{A}B^{\prime}} \big]
=0\quad\forall\;W_{B\tilde{A}B^{\prime}}\in\mathfrak{L}(B\tA B')\;.
\end{equation}
Therefore, we conclude that $\eta_{B\tilde{A}B'}=0$, which is equivalent to~\eqref{eq:choi-matrix-cond-unital}. 
\end{proof}

\subsection{Choi matrix formulation of conditionally mixing channels}

\label{app:locally-balanced-choi}

Here we provide a Choi matrix formulation for conditionally mixing channels. Before doing so, let us first recall~\cite[Eq.~(40)]{Piani_2006}, which indicates how semi-causality can be expressed in terms of the Choi matrix.

\begin{lemma}[\cite{Piani_2006,Gour5}]
\label{lem:semi-causal}
The channel $\mathcal{N}_{AB \rightarrow AB'}$ is $A\not\to B'$ semi-causal if and only if the marginals of its Choi matrix~$J^{\mathcal{N}}_{AB\tA B'}$ satisfy
\begin{align}
    J^{\mathcal{N}}_{ABB'} = \mathbf{u}_{A} \otimes J^{\mathcal{N}}_{BB'}.
\end{align}
\end{lemma}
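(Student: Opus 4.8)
The plan is to reduce the semi-causality condition to a statement about the reduced channel $\mathcal{N}_{AB\to B'} = \operatorname{Tr}_{\tilde{A}}\circ\mathcal{N}_{AB\to \tilde{A}B'}$ and then translate that statement into the asserted product structure of its Choi matrix. The starting observation is that $J^{\mathcal{N}}_{ABB'} = \operatorname{Tr}_{\tilde{A}}[J^{\mathcal{N}}_{AB\tilde{A}B'}]$ is precisely the Choi matrix of the reduced map $\mathcal{N}_{AB\to B'}$, since tracing out the output system $\tilde{A}$ of a Choi matrix yields the Choi matrix of that channel composed with $\operatorname{Tr}_{\tilde{A}}$. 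Thus it suffices to characterize, purely in terms of $\mathcal{N}_{AB\to B'}$, what semi-causality means, and then read off the Choi condition.

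First I would show that $\mathcal{N}_{AB\to B'}$ is $A\not\to B'$ semi-causal if and only if it factors through the partial trace over $A$, i.e.\ $\mathcal{N}_{AB\to B'} = \mathcal{L}_{B\to B'}\circ\operatorname{Tr}_A$ for some channel $\mathcal{L}_{B\to B'}$. The ``if'' direction is immediate: for any channel $\mathcal{M}_A$ one has $\operatorname{Tr}_A\circ\mathcal{M}_A = \operatorname{Tr}_A$ because $\mathcal{M}_A$ is trace-preserving, so $\mathcal{N}_{AB\to B'}\circ\mathcal{M}_A = \mathcal{L}\circ\operatorname{Tr}_A\circ\mathcal{M}_A = \mathcal{L}\circ\operatorname{Tr}_A = \mathcal{N}_{AB\to B'}$. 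For the ``only if'' direction, I would apply semi-causality to a replacer channel $\mathcal{R}^{\sigma}_A(\cdot) = \operatorname{Tr}[\cdot]\,\sigma_A$ with $\sigma_A$ an arbitrary fixed state. Since $(\mathcal{R}^{\sigma}_A\otimes\operatorname{id}_B)(X_{AB}) = \sigma_A\otimes\operatorname{Tr}_A[X_{AB}]$, this gives $\mathcal{N}_{AB\to B'}(\sigma_A\otimes\operatorname{Tr}_A[X_{AB}]) = \mathcal{N}_{AB\to B'}(X_{AB})$ for every operator $X_{AB}$, so that defining $\mathcal{L}_{B\to B'}(Y_B) \coloneqq \mathcal{N}_{AB\to B'}(\sigma_A\otimes Y_B)$ yields a channel (completely positive by construction, trace-preserving because $\mathcal{N}$ is and $\sigma_A$ is a state) satisfying $\mathcal{L}\circ\operatorname{Tr}_A = \mathcal{N}_{AB\to B'}$.

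It then remains to show that $\mathcal{N}_{AB\to B'}$ factors as $\mathcal{L}_{B\to B'}\circ\operatorname{Tr}_A$ if and only if $J^{\mathcal{N}}_{ABB'} = \mathbf{u}_A\otimes J^{\mathcal{N}}_{BB'}$. A direct computation of the Choi matrix of $\mathcal{L}_{B\to B'}\circ\operatorname{Tr}_A$, using $\operatorname{Tr}_A[\,|i\rangle\!\langle j|_A\otimes(\cdot)\,] = \delta_{ij}\,(\cdot)$, gives $I_A\otimes J^{\mathcal{L}}_{BB'}$, i.e.\ a product across the $A$ cut; conversely, since the Choi correspondence is a bijection, any map whose Choi matrix has this product form must be of the factored type. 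Finally, taking the partial trace over $A$ of a product $\mathbf{u}_A\otimes N_{BB'}$ returns $N_{BB'}$, which forces the second factor to equal the marginal $J^{\mathcal{N}}_{BB'}$ and converts the overall normalization $I_A$ into $\mathbf{u}_A$, yielding exactly the stated condition. I expect the main subtlety to lie not in any single step but in the bookkeeping: keeping track of the two relabelings of the $A$ output (as $\tilde{A}$ in the Choi picture versus $A$ in the semi-causal definition) and tracking the dimensional factor $|A|$ that distinguishes $I_A$ from $\mathbf{u}_A$ when passing between the factored form and the marginal $J^{\mathcal{N}}_{BB'}$.
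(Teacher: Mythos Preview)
Your argument is correct: reducing semi-causality to the factorization $\mathcal{N}_{AB\to B'}=\mathcal{L}_{B\to B'}\circ\operatorname{Tr}_A$ via a replacer channel, and then reading off the product form $I_A\otimes J^{\mathcal{L}}_{BB'}=\mathbf{u}_A\otimes J^{\mathcal{N}}_{BB'}$ of the Choi matrix, is clean and complete. Note, however, that the paper does not supply its own proof of this lemma; it is quoted as a known result with citations to \cite{Piani_2006,Gour5}, so there is no in-paper argument to compare against. Your proof is a valid self-contained substitute for the cited result.
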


As a direct corollary of~\eqref{eq:Choi-TP}--\eqref{eq:Choi-CP} and Lemmas~\ref{lem:cond-unital-choi} and~\ref{lem:semi-causal}, we have the following simple semi-definite characterization of a conditionally mixing channel in terms of its Choi matrix.

\begin{corollary}
\label{cor:CUSC-Choi-conds}
A matrix $J_{AB\tilde{A}B^{\prime}}^{\mathcal{N}}$ is the Choi matrix of a conditionally mixing channel $\mathcal{N}_{AB \rightarrow \tilde{A}B'}$ if and only if
\begin{align}
J_{AB\tilde{A}B^{\prime}}^{\mathcal{N}} & \geq 0, \\
\operatorname{Tr}_{\tilde{A}B^{\prime}}[J_{AB\tilde{A}B^{\prime}}^{\mathcal{N}}] & = I_{AB}, \\ 
J^{\mathcal{N}}_{ABB'} & = \mathbf{u}_{A} \otimes J^{\mathcal{N}}_{BB'},
\label{eq:LB-SDP-no-sig}\\
J^{\mathcal{N}}_{B \tilde{A} B'} & = J^{\mathcal{N}}_{BB'} \otimes \mathbf{u}_{\tilde{A}}.
\label{eq:LB-SDP-cond-unit}
\end{align}
\end{corollary}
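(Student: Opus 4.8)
The plan is to treat this statement as a direct assembly of the three prior Choi-matrix characterizations, exploiting the fact that the map $\mathcal{N}\mapsto J^{\mathcal{N}}$ is a bijection between linear maps $\mathcal{N}_{AB\to\tilde{A}B'}$ and matrices acting on $AB\tilde{A}B'$. By definition, a locally balanced channel is precisely a quantum channel (completely positive and trace-preserving) that is simultaneously conditionally unital and $A\not\to B'$ semi-causal. I would therefore verify that each of the four displayed conditions encodes exactly one of these defining attributes, all pertaining to the same $J^{\mathcal{N}}_{AB\tilde{A}B'}$ (here $\tilde{A}$ denotes the output copy of the $A$ system).

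For the forward implication, I would start from a locally balanced $\mathcal{N}_{AB\to\tilde{A}B'}$ and read off its Choi matrix. Complete positivity gives $J^{\mathcal{N}}_{AB\tilde{A}B'}\geq 0$ via \eqref{eq:Choi-CP}, and trace preservation gives $\operatorname{Tr}_{\tilde{A}B'}[J^{\mathcal{N}}_{AB\tilde{A}B'}]=I_{AB}$ via \eqref{eq:Choi-TP}. The $A\not\to B'$ semi-causality of $\mathcal{N}$ then yields \eqref{eq:LB-SDP-no-sig} directly from Lemma~\ref{lem:semi-causal}, while its conditional unitality yields \eqref{eq:LB-SDP-cond-unit} directly from Lemma~\ref{lem:cond-unital-choi}.

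For the converse, I would begin with an arbitrary matrix $J^{\mathcal{N}}_{AB\tilde{A}B'}$ satisfying all four conditions. The first two are exactly the hypotheses of the ``if'' direction of the Choi criteria in \eqref{eq:Choi-TP}--\eqref{eq:Choi-CP}, so $J^{\mathcal{N}}$ is the Choi matrix of a genuine channel $\mathcal{N}_{AB\to\tilde{A}B'}$. Condition \eqref{eq:LB-SDP-no-sig} is the hypothesis of the ``if'' direction of Lemma~\ref{lem:semi-causal}, certifying that $\mathcal{N}$ is $A\not\to B'$ semi-causal, and condition \eqref{eq:LB-SDP-cond-unit} is the hypothesis of the ``if'' direction of Lemma~\ref{lem:cond-unital-choi}, certifying that $\mathcal{N}$ is conditionally unital. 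Being a channel that is both semi-causal and conditionally unital, $\mathcal{N}$ is locally balanced by definition, which closes the equivalence.

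The only point requiring a moment of care --- rather than a genuine obstacle --- is that the four conditions all constrain the single matrix $J^{\mathcal{N}}$ associated with one fixed channel $\mathcal{N}$, so that no independent consistency check between the semi-causal and conditionally-unital constraints is needed. This compatibility is automatic: the Choi correspondence is one-to-one, so once positivity and the normalization $\operatorname{Tr}_{\tilde{A}B'}[J^{\mathcal{N}}]=I_{AB}$ pin down a unique $\mathcal{N}$, the remaining two marginal equalities are simply further properties that this same $\mathcal{N}$ either possesses or does not. The corollary thus follows immediately from combining \eqref{eq:Choi-TP}--\eqref{eq:Choi-CP} with Lemmas~\ref{lem:cond-unital-choi} and~\ref{lem:semi-causal}.
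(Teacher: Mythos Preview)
Your proposal is correct and matches the paper's own treatment: the paper states the corollary as an immediate consequence of \eqref{eq:Choi-TP}--\eqref{eq:Choi-CP} together with Lemmas~\ref{lem:cond-unital-choi} and~\ref{lem:semi-causal}, and you have simply written out that assembly explicitly in both directions.
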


Corollary~\ref{cor:CUSC-Choi-conds} simplifies the task of optimizing an objective function over the set of conditionally mixing channels by making use of semi-definite programming.
Additionally, by inspecting~\eqref{eq:LB-SDP-no-sig}--\eqref{eq:LB-SDP-cond-unit}, we see that the properties of these channels are invariant under the exchange of the input system $A$ and the output system $\tilde{A}$.

\section{Quantum conditional majorization reduces to classical conditional majorization}

\label{app:q-cond-maj-reduces-to-classical-cond-maj}

In~\cite{GGHKLV18}, a pre-order $\succ_{C}$ between two joint probability distributions (i.e., classical states) is called \emph{classical conditional majorization}. In this appendix, we prove that the approach to conditional majorization defined in Section~\ref{sec:q-cond-maj} is a generalization of classical conditional majorization. To this end, we demonstrate that quantum conditional majorization is equivalent to the classical conditional majorization when we restrict the underlying bipartite quantum states to classical states. 

Suppose the system $A\eqdef X$ is classical. We call a channel $\mathcal{M}_{XB\to XB'}$ \emph{conditionally doubly stochastic} (CDS for short) if it has the following form: 
\begin{equation}\label{c1}
    \mathcal{M}_{XB\to XB'}=\sum_{j}\mathcal{D}_{X\to X}^{(j)}\otimes\mathcal{F}_{B\to B'}^{(j)},
\end{equation}
where each $\mathcal{D}^{(j)}$ is a classical doubly stochastic channel and each $\mathcal{F}_{B\to B'}^{(j)}$ is a completely positive (CP) map such that $\sum_j \mathcal{F}_{B\to B'}^{(j)}$ is a quantum channel. CDS channels with classical $B$ and $B'$ systems were introduced and studied in~\cite{GGHKLV18}, and CDS channels with quantum $B$ and $B'$ systems were introduced and studied in~\cite{SDGWW21}.

\begin{definition}[\cite{GGHKLV18}]
Let $\rho_{XY}$ and $\sigma_{XY'}$ be classical bipartite states such that $A=X$, $B=Y$, and $B'=Y'$ are all classical systems. We say that $\rho_{XY}$ conditionally majorizes $\sigma_{XY'}$ and write $\rho_{XY} \succ_{C} \sigma_{XY'}$ if there exists a CDS channel $\mathcal{N}_{XY\to XY'}$ such that $\sigma_{XY'}=\mathcal{N}_{XY\to XY'}(\rho_{XY}) $.
\end{definition}

\begin{theorem}
\label{lem_red}
Suppose all the systems involved are classical with $A\eqdef X$, $B\eqdef Y$, and $B'\eqdef Y'$. Then a quantum channel $\mathcal{N}_{XY \rightarrow XY'}$ is a CDS channel if and only if it is conditionally mixing.
\end{theorem}

\begin{proof}
Suppose first that $\mathcal{N}_{XY \rightarrow XY'}$ is a CDS channel, having the form of~\eqref{c1}.
Then for every input of the form $\mathbf{u}_{A} \otimes \sigma_{B}$, it follows that 
\begin{align}
    & \mathcal{N}_{AB\to AB'}(\mathbf{u}_{A} \otimes \sigma_{B}) \notag \\
    &=\sum_{j}\mathcal{D}_{A}^{(j)}(\mathbf{u}_{A}) \otimes\mathcal{F}_{B\to B'}^{(j)}(\sigma_{B})  \\
    &= \mathbf{u}_{A} \otimes \sum_{j}\mathcal{F}_{B\to B'}^{(j)}(\sigma_{B})
\end{align}
and therefore $\mathcal{N}$ is conditionally unital. (Note that the third line of the above follows from the fact that $\mathcal{D}^{(j)}$ is doubly stochastic, so that $\mathcal{D}^{(j)}(\mathbf{u}_{A}) = \mathbf{u}_{A}$ for all $j$.) Next, we show that $\mathcal{N}$ is semi-causal. Let $\mathcal{T}_{A}$ be an arbitrary classical channel. Then consider that
\begin{align}
& \operatorname{Tr}_{A}\circ  \mathcal{N} \circ \mathcal{T}_{A}  \notag \\
   & = \operatorname{Tr}_{A}\circ \!\left[ \sum_{j}(\mathcal{D}_{A}^{(j)} \circ \mathcal{T}_{A}) \otimes\mathcal{F}_{B\to B'}^{(j)} \right] \\
    &=   \sum_{j} (\operatorname{Tr}_{A}\circ \mathcal{D}_{A}^{(j)} \circ \mathcal{T}_{A})  \otimes\mathcal{F}_{B\to B'}^{(j)}  \\
     &=   \sum_{j} (\operatorname{Tr}_{A}\circ \mathcal{D}_{A}^{(j)} ) \otimes\mathcal{F}_{B\to B'}^{(j)}  \\
     &= \operatorname{Tr}_{A}\circ  \mathcal{N}_{AB \rightarrow AB'} .
\end{align}
Given that $\mathcal{N}$ is conditionally unital and semi-causal, it is by definition conditionally mixing.

Conversely, suppose $\mN_{XY\to XY'}$ is conditionally mixing. Since $\mN$ is semi-causal, it has the form given in~\eqref{eq:semi-causal-breakdown} (see also Figure~\ref{fig_semi-casual_channel}). Moreover, since systems $A$, $B$, and $B'$ are all classical, we can assume without loss of generality that system $R$ in~\eqref{eq:semi-causal-breakdown} is also classical. Hence, $\mN$ has the form
\be\label{formns}
\mN_{XY\to XY'}=\sum_{j=1}^k\mE^{(j)}_{X\to X}\otimes\mR^{(j)}_{Y\to Y'}\;,
\ee 
where $k\in\mathbb{N}$, and for each $j\in[k]$, $\mE^{(j)}\in\cptp(X\to X)$ and $\mR^{(j)}\in\cp(Y\to Y')$ with $\mR\eqdef\sum_j\mR^{(j)}\in\cptp(Y\to Y')$. The key idea of the proof is to decompose each $\mR^{(j)}$ into ``extreme" maps. We do that in the following way. We first denote by $\{r_{jy'|y}\}$ the components of the transition matrix of $\mR^{(j)}$. That is, for all $j\in[k]$ and $y\in[n]$
\be
\mR^{(j)}_{Y\to Y'}\left(|y\lr y|_Y\right)=\sum_{y'=1}^{n'}r_{jy'|y}|y'\lr y'|_{Y'}\;.
\ee
We also denote by $r_{y'|y}\eqdef\sum_{j\in[k]}r_{jy'|y}$ the components of the transition matrix of $\mR$, and by $q_{j|yy'}\eqdef r_{jy'|y}/r_{y'|y}$. Finally, for any $y\in[n]$ and $y'\in[n']$ we define the CP map $\mF^{(yy')}$ as
\be\label{5p215}
\mF^{(yy')}_{Y\to Y'}\left(|w\lr w|_{Y}\right)=\delta_{yw}r_{y'|y}|y'\lr y'|_{Y'}\;.
\ee
By definition, $\sum_{y\in[n]}\sum_{y'\in[n']}\mF^{(yy')}$ is a (CPTP) classical channel, and
\be
\mR^{(j)}_{Y\to Y'}=\sum_{y\in[n]}\sum_{y'\in[n']}q_{j|yy'}\mF^{(yy')}_{Y\to Y'}\;.
\ee
With these definitions we get that
\begin{align}\label{5p217}
\mN_{XY\to XY'}&=\sum_{y\in[n]}\sum_{y'\in[n']}\sum_{j\in[k]}q_{j|yy'}\mE^{(j)}_{X\to X}\otimes\mF^{(yy')}_{Y\to Y'}\nonumber\\
&=\sum_{y\in[n]}\sum_{y'\in[n']}\mD^{(yy')}_{X\to X}\otimes\mF^{(yy')}_{Y\to Y'}\;,
\end{align}
where 
\be
\mD_{X\to X'}^{(yy')}\eqdef\sum_{j\in[k]}q_{j|yy'}\mE^{(j)}_{X\to X}\;.
\ee
Observe that each $\mD_{X\to X'}^{(yy')}$ is a classical channel since a convex combination of classical channels is a classical channel. It is therefore left to show that each $\mD_{X\to X'}^{(yy')}$ is unital. To show that, we use the fact that $\mN$ is a conditionally unital channel.

Since $\mN$ is a conditionally unital channel,  for all $w\in[n]$ we have that
\be
\u_X\otimes\sigma_{Y'}^{(w)}=\mN_{XY\to XY'}\left(\u_X\otimes |w\lr w|_{Y}\right),
\ee
where $\sigma^{(w)}\in\md(Y')$ is some state (in fact, $\sigma_{Y'}^{(w)}=\mN_{XY\to Y'}\left(\u_X\otimes |w\lr w|_{Y}\right)$). Therefore, using the form of $\mN$ in~\eqref{5p217} we get
\begin{align}
&\u_X\otimes\sigma_{Y'}^{(w)}\nonumber\\
&=\sum_{y\in[n]}\sum_{y'\in[n']}\mD_{X\to X'}^{(yy')}\left(\u_X\right)\otimes\mF^{(yy')}_{Y\to Y'}\left(|w\lr w|_{Y}\right)\\
&=\sum_{y'\in[n']}r_{y'|w}\mD^{(wy')}_{X\to X}\left(\u_X\right)\otimes|y'\lr y'|_{Y'}\;,
\end{align}
where the last equality follows from~\eqref{5p215}.
By taking the expectation value $\la w'|(\cdot)|w'\ra$ of system $Y'$ on both sides of the equation above, we get that for all $w'\in[n']$
\be
\u^X\la w'|\sigma_{Y'}^{(w)}|w'\ra=r_{w|w'}\mD^{(ww')}_{X\to X}\left(\u_X\right)\;.
\ee
By taking the trace on both sides of the equation above we get that  $\la w'|\sigma_{Y'}^{(w)}|w'\ra=r_{w|w'}$ (recall that $\mD^{(ww')}_{X\to X}$ is a channel and in particular trace preserving). Therefore, the equation above implies that 
\be
\u_X=\mD^{(ww')}_{X\to X}\left(\u_X\right)\;;
\ee
that is, each $\mD^{(ww')}_{X\to X}$ is a doubly stochastic matrix.
\end{proof}

\begin{remark}
Observe that the theorem above implies in particular that
$\sigma_{XY'} \succ_X  \rho_{XY}$ if and only if $\sigma_{XY'} \succ_{C} \rho_{XY}$, where $\succ_{C}$ is the classical conditional majorization defined in~\cite{GGHKLV18}. In other words, the pre-orders $\succ_X$ and $\succ_C$ coincide in the classical domain.
\end{remark}

\section{Proofs of properties of quantum conditional entropy}

\label{app:cond-ent-proofs}

Suppose that $\rho_{AB}=\omega_A\otimes\tau_B$ is a product state. Let $\mathcal{E}_{B}$ be the completely randomizing channel acting on system~$B$, which traces out system $B$ and replaces with the uniform state $\u_B$. Then it follows that $\operatorname{id}_{A}\otimes\mathcal{E}_{B}$ is a conditionally unital channel and that
\begin{equation}
    (\operatorname{id}_{A}\otimes\mathcal{E}_{B})\left(\omega_A\otimes\tau_B\right)=\omega_A\otimes\u_{B} .
\end{equation}
Moreover, let $\mathcal{R}_{B}$ be the replacement channel on system~$B$, which takes every quantum state to a fixed state~$\tau_B$. Then, $\operatorname{id}_{A}\otimes\mathcal{R}_{B}$ is conditionally unital and
\begin{equation}
     (\operatorname{id}_{A}\otimes\mathcal{R}_{B})\left(\omega_A\otimes\u_B\right)=\omega_A\otimes\tau_{B}\ .
\end{equation}
Thus, 
\begin{equation}
    \omega_A\otimes\tau_B\succ_{A}\omega_A\otimes\u_B \text{ and } \omega_A\otimes\u_B\succ_{A}\omega_A\otimes\tau_B,
\end{equation}
which implies, by the monotonicity property of the conditional entropy $\H$, that
\begin{equation}
    \H(A|B)_{ \omega_A\otimes\tau_B}= \H(A|B)_{ \omega_A\otimes\u_B}.
\end{equation}
Thus, $\H(A|B)_{ \omega_A\otimes\tau_B}$ depends on $\omega_A$ only. Moreover, the function $\omega_A\mapsto \H(A|B)_{ \omega_A\otimes\tau_B}$ satisfies the two postulates of entropy in~\eqref{eq:entropy-def-2}--\eqref{eq:entropy-def-3} and therefore is an entropy of $\omega_A$.
Note that the same statement also holds when system $|B|$ is trivial, i.e., when $|B|=1$.

We now prove that $\H(A|B)$ is invariant under local isometric channels. It follows immediately from the definition of quantum conditional entropy (specifically,~\eqref{eq:cond-maj-def-1}--\eqref{eq:cond-maj-def-3} and~\eqref{eq:QCE-1st-postulate-mono}) that it is invariant under local isometric channels acting on system~$A$. So we prove invariance under isometric channels acting on system $B$. The inequality 
\begin{equation}
\H(A|B)_{\rho} \leq \H(A|B')_{\omega}    
\end{equation}
follows because $\mathcal{V}_{B\to B'}$ is a conditionally unital channel. Let $\tau_B$ be a state. The reverse inequality follows because 
\begin{equation}
\mathcal{R}^{\mathcal{V}}_{B'\to B}(\cdot) \coloneqq  \mathcal{V}^{\dag}(\cdot) + \operatorname{Tr}[(\operatorname{id}_{B'} - \mathcal{V}^\dag)(\cdot)]\tau_B    
\end{equation}
is a channel that is a left inverse of $\mathcal{V}_{B\to B'}$~\cite[Section~4.6.3]{W17}. Thus, $\mathcal{R}^{\mathcal{V}}_{B'\to B}$ is a conditionally unital channel, from which we conclude that
\begin{equation}
\H(A|B')_{\omega} \leq \H(A|B)_{\mathcal{R}^{\mathcal{V}}(\omega)} = \H(A|B)_{\rho}.    
\end{equation}
This concludes the proof.

\section{A subclass of conditional entropies}

\label{app:alt-cond-ent}

Here we consider a subclass of conditional entropies in which the postulate in~\eqref{eq:QCE-1st-postulate-mono} is replaced by monotonicity under conditionally unital channels. Since the set of conditionally unital channels contains the set of conditionally mixing channels, every function that is a conditional entropy according to this alternative definition is also a conditional entropy $\H$, as defined in~\eqref{eq:cond-ent-gen-def-1st}--\eqref{eq:QCE-2nd-postulate-add}.
We will later see that this approach is consistent and sufficient for deriving various properties of conditional entropy.

Let us begin by defining another version of a conditional majorization pre-order on bipartite quantum states, based on conditionally unital channels exclusively. We call this approach relaxed conditional majorization.

\begin{definition}[Relaxed conditional majorization]
Given two bipartite states $\rho_{AB}$ and $\sigma _{A'B'}$, we say that $\rho_{AB}$ conditionally majorizes $\sigma _{A'B'}$ in the relaxed sense and with respect to system $A$, and write 
\begin{equation}
    \rho_{AB}\succ ^{ A}\sigma_{A'B'}
\end{equation}
if there exist a system $A''$ such that $|A''| \geq \max\{|A|,|A'|\}$, isometric channels $\mathcal{V}_{A\to A''}$ and $\mathcal{U}_{A'\to A''}$,  and a conditionally unital channel~$\mathcal{N}_{A''B \rightarrow A''B'}$  such that
\begin{equation}
    \mathcal{U}_{A'\to A''}(\sigma _{A'B'}) =(\mathcal{N}_{A''B\to A''B'} \circ \mathcal{V}_{A\to A''})(\rho_{AB}) \;.
 \end{equation}
\end{definition}

We define a function $f$ to be a conditional uncertainty measure in the relaxed sense if it is not equal to the zero function and if it reverses the relaxed conditional majorization pre-order, i.e., 
\begin{equation}
\rho_{AB} \succ^A \sigma_{A'B'} \quad \Rightarrow \quad f(\rho_{AB}) \leq f(\sigma_{A'B'}).
\end{equation}

We now define this other notion of conditional entropy. Formally, a function
\begin{equation}
\H^{\blacktriangle} : \bigcup_{A, B} \mathfrak{D}(AB) \rightarrow \mathbb{R}
\label{eq:alt-cond-ent-gen-def-1st}
\end{equation}
is a quantum conditional entropy if it is not equal to the zero function and satisfies the following postulates for all systems $A$, $B$, $A'$, $B'$ and states $\rho_{AB}$ and $\sigma_{A'B'}$:
\begin{enumerate}
    \item Monotonicity: 
    \begin{equation}
      \rho_{AB}\succ ^{ A}\sigma_{A'B'} \  \Rightarrow \  \H^{\blacktriangle}(A|B)_{\rho} \leq \H^{\blacktriangle}(A'|B')_{\sigma}. 
    \label{eq:alt-QCE-1st-postulate-mono}   
    \end{equation}
    
    \item Additivity:
    \begin{equation}
    \H^{\blacktriangle}(AA'|BB')_{\rho \otimes \sigma} = \H^{\blacktriangle}(A|B)_{\rho} + \H^{\blacktriangle}(A'|B')_{\sigma}    .
        \label{eq:alt-QCE-2nd-postulate-add}
    \end{equation}
    
\end{enumerate}

Let us now recall a different definition of conditional min-entropy from~\cite{Renner2005}
\begin{equation}
H_{\min}^{\uparrow}(A|B)_{\rho}  \coloneqq \sup_{\sigma_B \in \mathfrak{D}(B)} H_{\min}(\rho_{AB}|\sigma_{B}),
\label{eq:alt-cond-min-ent-def}
\end{equation}
where
\begin{equation}
H_{\min}(\rho_{AB}|\sigma_{B}) \coloneqq - \inf_{\lambda \geq 0} \log_2 \{\lambda : \rho_{AB} \leq \lambda I_A \otimes \sigma_B\} .
\end{equation}
Note that $H_{\min}^{\uparrow}(A|B)_{\rho}$ is a conditional entropy $\H^{\blacktriangle}$ because it is additive~\cite[page~4342]{KRS09}, and~\cite[Lemma~3.1.12]{Renner2005} implies that $H_{\min}^{\uparrow}(A|B)_{\rho}$ is non-decreasing under conditionally unital channels. 

We then have the following findings, analogous to those from Theorem~\ref{thm:neg-cond-ent} and Corollary~\ref{cor:pos-QCE-reduct}, respectively.

\begin{theorem}
\label{thm:alt-neg-cond-ent}
Let $\H^{\blacktriangle}$ be a quantum conditional entropy, as defined in~\eqref{eq:alt-cond-ent-gen-def-1st}--\eqref{eq:alt-QCE-2nd-postulate-add}. Then, for every state $\rho_{AB}$,
\begin{equation}
\H^{\blacktriangle}(A|B)_\rho\geq H^{\uparrow}_{\min}(A|B)_\rho.
\label{eq:alt-main-thm}
\end{equation}
Equality is attained in~\eqref{eq:alt-main-thm} if $\rho_{AB}$ is equal to a maximally entangled state $\Phi^{(k)}_{A'B'}$ by the action of local isometric channels. Thus,
\begin{equation}
\H^{\blacktriangle}(A|B)_{\Phi^{(k)}} = -\log_2 k
\label{eq:alt-main-thm-equality}
\end{equation}
for every conditional entropy $\H^{\blacktriangle}$.
\end{theorem}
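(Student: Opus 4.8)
The plan is to prove the two assertions in sequence: first the lower bound in \eqref{eq:alt-main-thm}, valid for every $\H^{\blacktriangle}$, and then the claim that this bound is saturated at the maximally entangled state. The whole argument runs parallel to the proof of Theorem~\ref{thm:neg-cond-ent}, with conditionally unital channels and the relaxed conditional majorization $\succ^{A}$ playing the role there played by locally balanced channels, and with $H^{\uparrow}_{\min}$ replacing $H_{\min}$. Since $H^{\uparrow}_{\min}$ is itself a conditional entropy $\H^{\blacktriangle}$, once the lower bound holds for all $\H^{\blacktriangle}$ it applies in particular to $H^{\uparrow}_{\min}$, and the saturation at $\Phi^{(k)}$ will pin down the common value.

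For the lower bound I would exploit the semidefinite characterization of $H^{\uparrow}_{\min}$. By \eqref{eq:alt-cond-min-ent-def}, $\lambda \coloneqq 2^{-H^{\uparrow}_{\min}(A|B)_\rho}$ is the least scalar for which $\rho_{AB}\leq \lambda\, I_A\otimes\sigma_B$ holds for some optimal $\sigma_B\in\mathfrak{D}(B)$. This operator inequality is precisely the statement that $\rho_{AB}$ is (proportional to) the Choi matrix of a completely positive map on the $B$ system; equivalently, $\rho_{AB}$ can be produced from a maximally entangled state by a channel acting \emph{only on $B$}. Such a channel is local on the conditioning system, hence both $A\not\to B'$ semi-causal and conditionally unital, and in particular conditionally unital. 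This yields a relaxed conditional majorization $\Phi^{(d)}\otimes \u^{(m)}\succ^{A}\rho_{AB}$, with $d,m$ tracking $\lambda$. Applying monotonicity \eqref{eq:alt-QCE-1st-postulate-mono}, additivity \eqref{eq:alt-QCE-2nd-postulate-add}, and the value $\H^{\blacktriangle}(\Phi^{(d)})=-\log_2 d$ from the next step, one gets $\H^{\blacktriangle}(A|B)_\rho \ge -\log_2 d+\log_2 m$. To reach the sharp constant $H^{\uparrow}_{\min}(A|B)_\rho$ for non-integer $\lambda$, I would pass to tensor powers $\rho^{\otimes n}$ (using $H^{\uparrow}_{\min}(\rho^{\otimes n})=nH^{\uparrow}_{\min}(\rho)$ and additivity of $\H^{\blacktriangle}$) and let $n\to\infty$, choosing admissible $d,m$ that approximate $\lambda^{\pm n}$ arbitrarily well.

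The technical heart is the evaluation $\H^{\blacktriangle}(A|B)_{\Phi^{(k)}}=-\log_2 k$. The direction $\ge$ is immediate from the lower bound together with $H^{\uparrow}_{\min}(A|B)_{\Phi^{(k)}}=-\log_2 k$, obtained by taking the optimal $\sigma_B=\u_B$ in \eqref{eq:alt-cond-min-ent-def} and noting that $\Phi^{(k)}_{AB}\le \lambda\, I_A\otimes\u_B$ forces $\lambda\ge k$. The reverse inequality $\le$ is the main obstacle. Writing $g(k)\coloneqq \H^{\blacktriangle}(A|B)_{\Phi^{(k)}}$, additivity together with $\Phi^{(k)\otimes n}\cong \Phi^{(k^n)}$ (by local isometric channels, under which $\H^{\blacktriangle}$ is invariant) gives the complete additivity $g(k_1 k_2)=g(k_1)+g(k_2)$, so $g$ is fixed once its values on a multiplicative generating set are known. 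To pin these to $-\log_2 k$ I would construct explicit conditionally unital channels transforming $\Phi^{(k)}$, suitably tensored with uniform registers, into reference states whose conditional entropy is already known from the normalization $\H^{\blacktriangle}(\u^{(d)}_A\otimes\rho_B)=\log_2 d$ and from the vanishing of $\H^{\blacktriangle}$ on pure product states. The delicate point is that conditional unitality is extremely restrictive: on inputs $\u_A\otimes\xi_B$ it forbids any transfer of the $A$-register into $B$ and permits purifying the $A$-side only to the extent of its entanglement with $B$, so the naive teleportation- and superdense-coding-based processes fail to be conditionally unital. The building blocks that do survive are \emph{measure-and-correct} channels---measure $B$ in a basis and apply a $B$-controlled \emph{unitary} (not a reset) to $A$---which one checks directly are conditionally unital (indeed locally balanced). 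Assembling these into a process realizing the required majorization, mirroring the construction behind Theorem~\ref{thm:neg-cond-ent}, is where the real work lies.

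Combining $\ge$ and $\le$ gives $\H^{\blacktriangle}(A|B)_{\Phi^{(k)}}=-\log_2 k$ and hence equality in \eqref{eq:alt-main-thm} at $\Phi^{(k)}$. The extension to any state equal to $\Phi^{(k)}$ under local isometric channels is then immediate: conditional entropy is invariant under local isometric channels (by the argument already given for $\H$, since a local isometric channel and its left inverse are conditionally unital), so both $\H^{\blacktriangle}$ and $H^{\uparrow}_{\min}$ take the same value there as on $\Phi^{(k)}$ itself, preserving equality.
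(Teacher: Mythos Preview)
Your outline has two genuine gaps.

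\textbf{The lower bound is circular, and the Choi step fails.} You propose to reach $\H^{\blacktriangle}(A|B)_\rho \ge H_{\min}^{\uparrow}(A|B)_\rho$ via a relaxed conditional majorization $\Phi^{(d)}\otimes\u^{(m)}\succ^{A}\rho_{AB}$ and then invoke $\H^{\blacktriangle}(\Phi^{(d)})=-\log_2 d$. But monotonicity gives $\H^{\blacktriangle}(\rho)\ge \H^{\blacktriangle}(\Phi^{(d)})+\log_2 m$, so what you actually need is the \emph{lower} bound $\H^{\blacktriangle}(\Phi^{(d)})\ge -\log_2 d$; you then say this direction ``is immediate from the lower bound'' you are in the process of proving. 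Independently, the inference from $\rho_{AB}\le\lambda\,I_A\otimes\sigma_B$ to ``$\rho_{AB}$ can be produced from a maximally entangled state by a channel acting only on $B$'' is false: the latter forces $\rho_A=\u_A$, which need not hold. The paper avoids both problems by never routing through $\Phi^{(d)}$. It builds explicit measure-and-prepare channels $\mathcal{N}_{X\to AB}$ (when $H_{\min}^{\uparrow}\ge 0$) and $\mathcal{N}_{X\to AA'B}$ (when $H_{\min}^{\uparrow}<0$), with $X$ a classical Alice register and Bob's input trivial, that are conditionally unital and send a rank-$m$ uniform state (respectively a pure state) to $\rho_{AB}$ (respectively $\rho_{AB}\otimes\u_{A'}$). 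Monotonicity together with the already-established entropy of uniform and pure states yields $\H^{\blacktriangle}(\rho)\ge\log_2\lfloor 2^{H_{\min}^{\uparrow}}\rfloor$ and $\H^{\blacktriangle}(\rho)\ge-\log_2\lceil 2^{-H_{\min}^{\uparrow}}\rceil$ in the two cases, and tensor powers remove the floor and ceiling.

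\textbf{The upper bound at $\Phi^{(k)}$.} Your measure-$B$-then-unitary-on-$A$ channels are conditionally unital, but applied to $\Phi^{(k)}$ (even after tensoring with $\u_{\tilde A}$) they can at best reach a state of $A$-rank $k$ such as $|0\rangle\!\langle 0|_A\otimes\u_B$, giving only $\H^{\blacktriangle}(\Phi^{(k)})\le 0$ and never $-\log_2 k$: a $B$-controlled unitary on $A$ cannot reduce the rank of the $A$-marginal. You also overestimate how restrictive conditional unitality is---it constrains the channel only on inputs $\u_A\otimes\xi_B$ and imposes no signalling restriction at all. The paper exploits this with a very different channel $\mathcal{N}_{A\tilde A B\to X}$: trace out $\tilde A$, then perform the \emph{joint} binary measurement $\{\Phi^{(k)}_{AB},\,I_{AB}-\Phi^{(k)}_{AB}\}$ and output a classical Alice register $X$ of dimension $k^2$, with Bob's output trivial. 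One checks directly that $I_{A}\otimes I_{\tilde A}\otimes\sigma_B\mapsto I_X$, so the channel is conditionally unital, while $\Phi^{(k)}\otimes\u_{\tilde A}\mapsto |1\rangle\!\langle 1|_X$; hence $\H^{\blacktriangle}(\Phi^{(k)})+\log_2 k\le 0$.
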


\begin{corollary}
\label{cor:alt-pos-QCE-reduct}
Let $\rho_{AB}$ be a bipartite state. Then the following are equivalent:
\begin{enumerate}
\item For every conditional entropy $\H^{\blacktriangle}$,
\begin{equation}
\H^{\blacktriangle}(A|B)_\rho\geq 0.
\end{equation}
\item There exists a state $\sigma_B$ such that $I_A\otimes\sigma_B\geq\rho_{AB}$.
\end{enumerate}
\end{corollary}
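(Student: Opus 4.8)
The plan is to prove both implications by pivoting on the single distinguished conditional entropy $H_{\min}^{\uparrow}(A|B)_\rho$ from \eqref{eq:alt-cond-min-ent-def}, in exact parallel with how the proof of Corollary~\ref{cor:pos-QCE-reduct} pivots on $H_{\min}(A|B)_\rho$. Two facts recorded just above Theorem~\ref{thm:alt-neg-cond-ent} are the pillars of the argument: first, $H_{\min}^{\uparrow}$ is itself a legitimate conditional entropy $\H^{\blacktriangle}$ (it is additive and, by \cite[Lemma~3.1.12]{Renner2005}, monotone under conditionally unital channels); and second, Theorem~\ref{thm:alt-neg-cond-ent} supplies the universal lower bound $\H^{\blacktriangle}(A|B)_\rho \geq H_{\min}^{\uparrow}(A|B)_\rho$ for every $\H^{\blacktriangle}$. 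Combining these, condition~1 of the corollary (nonnegativity of \emph{every} $\H^{\blacktriangle}$) holds if and only if $H_{\min}^{\uparrow}(A|B)_\rho \geq 0$: the ``only if'' is immediate because $H_{\min}^{\uparrow}$ is among the functions being quantified over, and the ``if'' follows because the lower bound forces all other $\H^{\blacktriangle}$ to be nonnegative too. The whole corollary therefore collapses to the single scalar-to-operator equivalence $H_{\min}^{\uparrow}(A|B)_\rho \geq 0 \Leftrightarrow (2)$.

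To establish that equivalence, I would first recast the nested optimization in \eqref{eq:alt-cond-min-ent-def} as one semidefinite program. Merging the supremum over $\sigma_B \in \md(B)$ with the inner infimum over $\lambda$ and substituting $\tau_B \coloneqq \lambda\,\sigma_B$ (so that $\lambda = \tr[\tau_B]$ and $\tau_B$ sweeps out all positive semidefinite operators on $B$), one obtains
\begin{equation}
H_{\min}^{\uparrow}(A|B)_\rho = -\log_2 \min_{\tau_B \geq 0}\left\{\tr[\tau_B] : \rho_{AB} \leq I_A \otimes \tau_B\right\}.
\end{equation}
The minimum is attained, since the feasible set is closed and contains $\tau_B = I_B$ (because $\rho_{AB} \leq I_{AB}$ for any density operator), so intersecting with a sublevel set of the objective gives a nonempty compact domain. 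Consequently, $H_{\min}^{\uparrow}(A|B)_\rho \geq 0$ is equivalent to the attainment of a minimizer $\tau_B^{\ast} \geq 0$ with $t^{\ast} \coloneqq \tr[\tau_B^{\ast}] \leq 1$ and $\rho_{AB} \leq I_A \otimes \tau_B^{\ast}$.

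With this SDP reformulation in hand, both directions are short. If item~2 holds, then $\tau_B = \sigma_B$ is feasible with trace exactly $1$, so the minimal trace is at most $1$ and hence $H_{\min}^{\uparrow}(A|B)_\rho \geq 0$. Conversely, from $H_{\min}^{\uparrow}(A|B)_\rho \geq 0$ I would recover a genuine state by the inflation trick: the optimizer $\tau_B^{\ast}$ is only subnormalized ($t^{\ast} \leq 1$), so I set $\sigma_B \coloneqq \tau_B^{\ast} + (1 - t^{\ast})\,\u_B$, which is positive semidefinite with trace $t^{\ast} + (1 - t^{\ast}) = 1$, hence a density operator, and which dominates $\tau_B^{\ast}$, so that $I_A \otimes \sigma_B \geq I_A \otimes \tau_B^{\ast} \geq \rho_{AB}$, yielding item~2. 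The only delicate point in the entire argument—and the step I would treat most carefully—is precisely this passage from a subnormalized optimal $\tau_B^{\ast}$ to a bona fide state $\sigma_B$, together with the attainment claim that licenses it; both become transparent once the problem is written as the SDP above rather than left in the nested sup-inf form of \eqref{eq:alt-cond-min-ent-def}.
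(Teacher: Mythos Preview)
Your proposal is correct and follows essentially the same route as the paper: reduce condition~1 to the single inequality $H_{\min}^{\uparrow}(A|B)_\rho \geq 0$ via Theorem~\ref{thm:alt-neg-cond-ent} and the fact that $H_{\min}^{\uparrow}$ is itself an $\H^{\blacktriangle}$, and then read off the equivalence with condition~2 from the definition in~\eqref{eq:alt-cond-min-ent-def}. The only difference is that the paper leaves the passage ``$H_{\min}^{\uparrow}(A|B)_\rho \geq 0 \Rightarrow \exists\,\sigma_B$ with $I_A\otimes\sigma_B\geq\rho_{AB}$'' as an immediate consequence of the definition, whereas you unpack it carefully through the SDP reformulation, attainment by compactness, and the inflation $\sigma_B = \tau_B^{\ast} + (1-t^{\ast})\u_B$; this extra care is sound but not a genuinely different idea.
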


\begin{proof}
The proof of Corollary~\ref{cor:alt-pos-QCE-reduct} is immediate from Theorem~\ref{thm:alt-neg-cond-ent}. Indeed, if the first condition holds, then the inequality $H^{\uparrow}_{\min}(A|B)_\rho\geq 0$ holds in particular (since $H^{\uparrow}_{\min}(A|B)_\rho$ is a conditional entropy), and the second condition is then a consequence of the definition in~\eqref{eq:alt-cond-min-ent-def}. Now suppose that the second condition holds. This then means that there exists $\lambda \leq 1$ and a state $\sigma_B$ such that $\rho_{AB}\leq \lambda I_A\otimes\sigma_B$. According to the definition in~\eqref{eq:alt-cond-min-ent-def}, it follows that $H^{\uparrow}_{\min}(A|B)_\rho\geq 0$, and then Theorem~\ref{thm:alt-neg-cond-ent} implies the first condition.
\end{proof}

\bigskip
We have opted to emphasize the conditional entropy~$\H$ over $\H^{\blacktriangle}$ in the main text because
\begin{enumerate}
\item conditionally mixing channels have a strong motivation in terms of no information leakage from Alice to Bob, as would be expected for a measure of conditional uncertainty;
\item conditional majorization defined in this way reduces to classical conditional majorization, as defined in~\cite{GGHKLV18}, for the classical case;
\item the conditional min-entropy in~\eqref{eq:cond-min-ent-def} is in fact the smallest conditional entropy among all conditional entropies consistent with both sets of postulates;
\item every conditional entropy defined in this other way is in fact a conditional entropy $\H$, because every function that is monotone under all conditionally unital channels is also monotone under the subset of all conditionally mixing channels.
\end{enumerate}

\subsection{Proof of Theorem~\ref{thm:alt-neg-cond-ent}}

\label{app:proof-main-thm-alt-CE}

Let $\sigma_{B}$ be a state that achieves the optimal value of $H_{\min}^{\uparrow}(A|B)_{\rho}$, in the definition in~\eqref{eq:alt-cond-min-ent-def}.

Let us first consider the case when $H_{\min}^{\uparrow}(A|B)_{\rho}\geq0$. Set%
\begin{align}
k  & \coloneqq \left\vert A\right\vert , \label{eq:alt-cond-ent-main-thm-proof-1-1}\\
m  & \coloneqq \left\lfloor 2^{H_{\min}^{\uparrow}(A|B)_{\rho}}\right\rfloor
,\label{eq:m-def-non-neg-CE}%
\end{align}
and let $X$ be a classical system with dimension $k$. By the assumption that
$H_{\min}^{\uparrow}(A|B)_{\rho}\geq0$, and given the dimension bound $H_{\min
}^{\uparrow}(A|B)_{\rho}\leq\log_{2}k$~\cite[Lemma~5.2]{KR11}, it follows that $m\in\left[  k\right]  $.

Without loss of generality, we can suppose that $k>m$; if it is not the case,
then we can embed the system $A$ of $\rho_{AB}$ in a larger Hilbert space, so
that $H_{\min}^{\uparrow}(A|B)_{\rho}$ stays constant but $k$ increases. Set $\Pi_{X}$ to
be a projection onto an $m$-dimensional subspace of $X$ (i.e., set $\Pi
_{X}\coloneqq \sum_{x=1}^{m}|x\rangle\!\langle x|_{X}$, where $\{|x\rangle\}_{x}$ is an
orthonormal basis specifying the classical system $X$), and define $\tau
_{AB}^{(m)}$ as follows:%
\begin{equation}
\tau_{AB}^{(m)}\coloneqq \frac{I_{A}\otimes\sigma_{B}-m\rho_{AB}}{k-m}.
\end{equation}
The operator $\tau_{AB}^{(m)}$ is positive semi-definite because $k-m>0$ and%
\begin{align}
& \frac{1}{m}I_{A}\otimes\sigma_{B}-\rho_{AB}\nonumber\\
& \geq2^{-H_{\min}^{\uparrow}(A|B)_{\rho}}I_{A}\otimes\sigma_{B}-\rho_{AB}\\
& =2^{D_{\max}(\rho_{AB}\Vert I_{A}\otimes\sigma_{B})}I_{A}\otimes\sigma
_{B}-\rho_{AB}\\
& \geq0.
\end{align}
Also, $\tau_{AB}^{(m)}$ has trace equal to one and is thus a state. Let us now
define the following measure-and-prepare channel:
\begin{multline}
\mathcal{N}_{X\rightarrow AB}(\omega_{X})\coloneqq \\
\operatorname{Tr}[\Pi_{X}\omega
_{X}]\rho_{AB}
+\operatorname{Tr}[\left(  I_{X}-\Pi_{X}\right)  \omega_{X}
]\tau_{AB}^{(m)}.
\label{eq:ch-def-1}
\end{multline}
Indeed, the action of this channel is to perform a measurement according to
the POVM\ $\left\{  \Pi_{X},I_{X}-\Pi_{X}\right\}  $ and prepare the state
$\rho_{AB}$ if the first outcome is obtained and the state $\tau_{AB}^{(m)}$
if the second outcome is obtained. It should be understood that Alice has
access to the input system $X$.

This channel is conditionally unital because%
\begin{align}
& \mathcal{N}_{X\rightarrow AB}(I_{X})  \notag \\
& =\operatorname{Tr}[\Pi_{X}I_{X}%
]\rho_{AB}+\operatorname{Tr}[\left(  I_{X}-\Pi_{X}\right)  I_{X}]\tau
_{AB}^{(m)} \label{eq:cond-unit-1-1} \\
& =m\rho_{AB}+\left(  k-m\right)  \tau_{AB}^{(m)}\\
& =I_{A}\otimes\sigma_{B}.
\label{eq:cond-unit-1-3}
\end{align}
Also, if we input the uniform state $\mathbf{u}^{\Pi}_{X}\coloneqq \Pi_{X}/m$ on the
subspace onto which $\Pi_{X}$ projects, the output is given by
\begin{equation}
\mathcal{N}_{X\rightarrow AB}(\mathbf{u}^{\Pi}_{X})=\rho_{AB}
.\label{eq:ch-identity-hmin-lower->0}
\end{equation}
Putting these observations together, we conclude that
\begin{align}
\log_{2}\!\left\lfloor 2^{H_{\min}^{\uparrow}(A|B)_{\rho}}\right\rfloor  & =\mathbf{H}^{\blacktriangle}
(X)_{\mathbf{u}^{\Pi}} \label{eq:hmin-<=-cond-ent-+case-1}\\
& \leq\mathbf{H}^{\blacktriangle}(A|B)_{\mathcal{N}(\mathbf{u}^{\Pi})}\\
& =\mathbf{H}^{\blacktriangle}(A|B)_{\rho}.
\label{eq:hmin-<=-cond-ent-+case-3}
\end{align}
The first equality follows because the entropy of a uniform state is equal to
the logarithm of its rank and from the definition of $m$ in
\eqref{eq:m-def-non-neg-CE}. The inequality follows from the fact that
$\mathcal{N}_{X\rightarrow AB}$ is conditionally unital (see~\eqref{eq:cond-unit-1-1}--\eqref{eq:cond-unit-1-3}) and $\mathbf{H}^{\blacktriangle}$ is a
quantum conditional entropy, thus obeying the first postulate in~\eqref{eq:alt-QCE-1st-postulate-mono}. The final equality follows from~\eqref{eq:ch-identity-hmin-lower->0}.

Therefore, since all conditional entropies are additive for tensor-product states, we conclude that
\begin{align}
\H^{\blacktriangle}(A|B)_{\rho}&=\lim_{n\to\infty}\frac1n\H^{\blacktriangle}(A^n|B^n)_{\rho^{\otimes n}}\\
&\geq \lim_{n\to\infty}\frac1n\log_2\!\left\lfloor2^{H_{\min}^{\uparrow}(A^n|B^n)_{\rho^{\otimes n}}}\right\rfloor\\
&=\lim_{n\to\infty}\frac1n\log_2\!\left\lfloor2^{nH_{\min}^{\uparrow}(A|B)_{\rho}}\right\rfloor\\
&=H_{\min}^\ua(A|B)_\rho\;.
\label{eq:alt-cond-ent-main-thm-proof-1-last}
\end{align}
The inequality follows from~\eqref{eq:hmin-<=-cond-ent-+case-1}--\eqref{eq:hmin-<=-cond-ent-+case-3}. The second equality follows from additivity, and the final equality from the assumption that $H_{\min}^{\uparrow}(A|B)_\rho\geq 0$.

Now let us consider the case when $H_{\min}^{\uparrow}(A|B)<0$. Let%
\begin{align}
t  & \coloneqq 2^{-H_{\min}^{\uparrow}(A|B)},
\label{eq:alt-cond-ent-main-thm-proof-2-1}\\
k^{\prime}  & \coloneqq \left\lceil t\right\rceil ,\\
k  & \coloneqq \left\vert A\right\vert .
\end{align}
Observe that $t>1$ by assumption. Also, observe that $t$ is the smallest real
satisfying $\rho_{AB}\leq tI_{A}\otimes\sigma_{B}$. Let $A^{\prime}$ be a
system of dimension $k^{\prime}$, and let $X$ be a classical system of
dimension $\left\vert X\right\vert \coloneqq kk^{\prime}$. Now define%
\begin{equation}
\tau_{AB}^{(k,k^{\prime})}\coloneqq \frac{k^{\prime}I_{A}\otimes\sigma_{B}-\rho_{AB}%
}{kk^{\prime}-1},
\end{equation}
and observe that $\tau_{AB}^{(k,k^{\prime})}$ is positive semi-definite
because $kk^{\prime}>1$ and%
\begin{align}
& k^{\prime}I_{A}\otimes\sigma_{B}-\rho_{AB} \notag \\
& \geq2^{-H_{\min}^{\uparrow}(A|B)}I_{A}\otimes\sigma_{B}-\rho_{AB}\\
& =2^{D_{\max}(\rho_{AB}\Vert I_{A}\otimes\sigma_{B})}I_{A}\otimes\sigma
_{B}-\rho_{AB}\\
& \geq0.
\end{align}
Since $\tau_{AB}^{(k,k^{\prime})}$ has trace equal to one, it follows that it
is a state. We then construct the following channel:%
\begin{multline}
\mathcal{N}_{X\rightarrow AA^{\prime}B}(\omega_{X})\coloneqq \\
\left(  \operatorname{Tr}%
[|1\rangle\!\langle1|_{X}\omega_{X}]\rho_{AB}+\operatorname{Tr}[\Pi_{X}%
\omega_{X}]\tau_{AB}^{(k,k^{\prime})}\right)  \otimes\mathbf{u}_{A^{\prime}},
\label{eq:ch-def-2}
\end{multline}
where
\begin{equation}
\Pi_{X}\coloneqq I_{X}-|1\rangle\!\langle1|_{X},
\end{equation}
so that $\operatorname{Tr}%
[\Pi_{X}]=kk^{\prime}-1$. Also, it should be understood that Alice has access
to the $X$ input system. Consider that%
\begin{equation}
\mathcal{N}_{X\rightarrow AA^{\prime}B}(|1\rangle\!\langle1|_{X})=\rho
_{AB}\otimes\mathbf{u}_{A^{\prime}}.\label{eq:cond-ent-neg-key-identity}%
\end{equation}
Furthermore, the channel is conditionally unital because%
\begin{align}
& \mathcal{N}_{X\rightarrow AA^{\prime}B}(I_{X})  \notag \\
& =\left(  \operatorname{Tr}%
[|1\rangle\!\langle1|_{X}I_{X}]\rho_{AB}+\operatorname{Tr}[\Pi_{X}I_{X}%
]\tau_{AB}^{(k,k^{\prime})}\right)  \otimes\mathbf{u}_{A^{\prime}}\notag \\
& =\left(  \rho_{AB}+\left(  kk^{\prime}-1\right)  \tau_{AB}^{(k,k^{\prime}%
)}\right)  \otimes\mathbf{u}_{A^{\prime}}\\
& =k^{\prime}I_{A}\otimes\sigma_{B}\otimes\mathbf{u}_{A^{\prime}}\\
& =I_{A}\otimes\sigma_{B}\otimes I_{A^{\prime}}.
\end{align}
Then it follows that%
\begin{align}
0  & =\mathbf{H}^{\blacktriangle}(X)_{|1\rangle\!\langle1|} \label{eq:neg-min-ent-ineq-1} \\
& \leq\mathbf{H}^{\blacktriangle}(AA^{\prime}|B)_{\mathcal{N}(|1\rangle\!\langle1|)}\\
& =\mathbf{H}^{\blacktriangle}(AA^{\prime}|B)_{\rho\otimes\mathbf{u}}\\
& =\mathbf{H}^{\blacktriangle}(A|B)_{\rho}+\mathbf{H}(A^{\prime})_{\mathbf{u}}\\
& =\mathbf{H}^{\blacktriangle}(A|B)_{\rho}+\log_{2}\left\lceil 2^{-H_{\min}^{\uparrow}(A|B)}\right\rceil .
\label{eq:neg-min-ent-ineq-last}
\end{align}
The first equality follows because the entropy of a pure state is equal to
zero. The inequality follows because $\mathcal{N}_{X\rightarrow AA^{\prime}B}$
is a conditionally unital channel and $\mathbf{H}^{\blacktriangle}$ is a
quantum conditional entropy, thus obeying the first postulate in~\eqref{eq:alt-QCE-1st-postulate-mono}. The second equality follows from
\eqref{eq:cond-ent-neg-key-identity}. The penultimate equality follows from
the additivity postulate of quantum conditional entropy. The final equality
follows because the entropy of the uniform state $\mathbf{u}_{A^{\prime}}$ is
equal to the logarithm of its rank, which is in turn equal to $\left\lceil
2^{-H_{\min}^{\uparrow}(A|B)}\right\rceil $. 

Now, since $\H^{\blacktriangle}(A|B)_\rho$ is additive for tensor-product states, we find that
\begin{align}
& \H^{\blacktriangle}(A|B)_{\rho}\notag \\
&=\lim_{n\to\infty}\frac1n\H^{\blacktriangle}(A^n|B^n)_{\rho^{\otimes n}}\\
&\geq \lim_{n\to\infty}-\frac1n\log_2\!\left\lceil2^{-H_{\min}^{\uparrow}(A^n|B^n)_{\rho^{\otimes n}}}\right\rceil\\
&=\lim_{n\to\infty}-\frac1n\log_2\!\left\lceil2^{-nH_{\min}^{\uparrow}(A|B)_{\rho}}\right\rceil\\
&=\lim_{n\to\infty}-\frac1n\log_2\!\left\lceil t^n\right\rceil\\
&\geq\lim_{n\to\infty}-\frac1n\log_2\!\left(t^n+1\right)\\
&=-\log_2 t\\
& =H_{\min}^{\uparrow}(A|B)_\rho\;.
\label{eq:alt-cond-ent-main-thm-proof-2-last}
\end{align}
The first inequality follows from~\eqref{eq:neg-min-ent-ineq-1}--\eqref{eq:neg-min-ent-ineq-last}. The second equality follows from additivity, and the penultimate equality from the assumption that $t>1$.
This completes the proof of the lower bound. 

It is left to prove the equality on maximally entangled states, i.e.,~\eqref{eq:alt-main-thm-equality}.
Since conditional entropy is invariant under local isometries, we can assume without loss of generality  that $k \eqdef |A| = |B|$. Now, consider the state $\Phi^{(k)}_{AB} \otimes \mathbf{u}_{\tA}$ where $|\tA| = k$, and observe that by introducing a trivial system $B_{2}$ such that $|B_{2}|=1$ we get
\begin{align}
     & \H^{\blacktriangle}(A\tA|B)_{\Phi^{(k)}_{AB} \otimes \mathbf{u}_{\tA}} \notag \\
    &=   \H^{\blacktriangle}(A\tA|B B_{2})_{\Phi^{(k)}_{AB} \otimes \mathbf{u} _{\tA B_{2}}} \label{eq:additive-max-ent-unif-1}\\
    & = \H^{\blacktriangle}(A|B)_{\Phi^{(k)}} + \H^{\blacktriangle}(\tA | B_{2})_{\mathbf{u} _{\tA B_{2}}} \\
    & = \H^{\blacktriangle}(A|B)_{\Phi^{(k)}} + \H^{\blacktriangle}(\mathbf{u}_{\tA}) \\
    &= \H^{\blacktriangle}(A|B)_{\Phi^{(k)}} + \log_2 k,
    \label{eq:additive-max-ent-unif-last}
\end{align}
where we used the additivity property of conditional entropy and the fact that the entropy of the uniform state $\mathbf{u}_{\tA}$ is equal to $\log_2 k$. It is therefore sufficient to show that
\begin{equation}
\H^{\blacktriangle}(A\tA|B)_{\Phi^{(k)}_{AB} \otimes \mathbf{u}_{\tA}} \leq 0 ,   
\end{equation}
because the above equation implies that 
\be
\H^{\blacktriangle}(A|B)_{\Phi^{(k)}}\leq-\log_2 k=H_{\min}^{\uparrow}(A|B)_{\Phi^{(k)}}\;.
\ee
For this purpose, let $X$ be a classical system of dimension $\left\vert
X\right\vert =k^{2}$ and let $\mathcal{N}_{A\tilde{A}B\rightarrow X}$ be a
quantum channel defined by%
\begin{equation}
\mathcal{N}_{A\tilde{A}B\rightarrow X}   \coloneqq \mathcal{E}_{AB\rightarrow X}%
\circ\operatorname{Tr}_{\tilde{A}},
\label{eq:channel-3-pf}
\end{equation}
where
\begin{multline}
\mathcal{E}_{AB\rightarrow X}(\omega_{AB})   \coloneqq \operatorname{Tr}[\Phi
_{AB}^{k}\omega_{AB}]|1\rangle\!\langle1|_{X}\\
+\operatorname{Tr}[(I_{AB}-\Phi_{AB}^{(k)})\omega_{AB}]\frac
{I_{X}-|1\rangle\!\langle1|_{X}}{k^{2}-1}.
\end{multline}
This channel is conditionally unital because%
\begin{align}
& \mathcal{N}_{A\tilde{A}B\rightarrow X}(I_{A}\otimes I_{\tilde{A}}%
\otimes\sigma_{B})\nonumber\\
& =(\mathcal{E}_{AB\rightarrow X}\circ\operatorname{Tr}_{\tilde{A}}%
)(I_{A}\otimes I_{\tilde{A}}\otimes\sigma_{B})\\
& =k \mathcal{E}_{AB\rightarrow X}(I_{A}\otimes\sigma_{B})\\
& =k \left(
\begin{array}
[c]{c}%
\operatorname{Tr}[\Phi_{AB}^{(k)}(I_{A}\otimes\sigma_{B})]|1\rangle\!\langle
1|_{X}\\
+\operatorname{Tr}[(I_{AB}-\Phi_{AB}^{(k)})(I_{A}\otimes\sigma_{B})]\frac
{I_{X}-|1\rangle\!\langle1|_{X}}{k^{2}-1}%
\end{array}
\right)  \\
& =k \left(  \frac{1}{k}|1\rangle\!\langle1|_{X}+\left(  k-\frac{1}{k}\right)
\frac{I_{X}-|1\rangle\!\langle1|_{X}}{k^{2}-1}\right)  \\
& =|1\rangle\!\langle1|_{X}+\left(  k^{2}-1\right)  \frac{I_{X}-|1\rangle
\langle1|_{X}}{k^{2}-1}\\
& =I_{X}.
\end{align}
Also, consider that%
\begin{equation}
\mathcal{N}_{A\tilde{A}B\rightarrow X}(\Phi_{AB}^{(k)}\otimes\mathbf{u}%
_{\tilde{A}})=|1\rangle\!\langle1|_{X}%
.\label{eq:max-ent-state-identity-cond-unit}%
\end{equation}
Then%
\begin{align}
\H^{\blacktriangle}(A|B)_{\Phi^{(k)}} + \log_2 k & = \mathbf{H}^{\blacktriangle}(A\tilde{A}|B)_{\Phi^{(k)}\otimes\mathbf{u}}  \\
& \leq\mathbf{H}^{\blacktriangle}%
(X)_{\mathcal{N}(\Phi^{(k)}\otimes\mathbf{u})}\\
& =\mathbf{H}^{\blacktriangle}(X)_{|1\rangle\!\langle1|}\\
& =0.
\label{eq:final-final-proof-end}
\end{align}
The first equality follows from~\eqref{eq:additive-max-ent-unif-1}--\eqref{eq:additive-max-ent-unif-last}. The inequality follows because the channel $\mathcal{N}_{A\tilde
{A}B\rightarrow X}$ is conditionally unital. The second equality follows from
\eqref{eq:max-ent-state-identity-cond-unit} and the last because the entropy
of every pure state is equal to zero. This concludes the proof.

\subsection{Proof of Theorem~\ref{thm:neg-cond-ent}}

\label{app:proof-main-thm-main-txt}

Here the idea is essentially the same as that given in Appendix~\ref{app:proof-main-thm-alt-CE}, except that we need to construct channels that are conditionally mixing (i.e., both conditionally unital and $A \not \to B$ semi-causal). 

First, throughout, we make the following replacement in the proof:
\begin{equation}
H_{\min}^{\uparrow}(A|B)_{\rho} \quad \to \quad 
H_{\min}(A|B)_{\rho}. 
\label{eq:proof-replacement}
\end{equation}

To handle the case when $H_{\min}(A|B)_{\rho} \geq 0$, we employ the same channel in~\eqref{eq:ch-def-1}, with the exception that we set
\begin{equation}
\tau_{AB}^{(m)}\coloneqq \frac{I_{A}\otimes\rho_{B}-m\rho_{AB}}{k-m}.
\label{eq:tau-replacement-1}
\end{equation}
All of the steps in~\eqref{eq:alt-cond-ent-main-thm-proof-1-1}--\eqref{eq:alt-cond-ent-main-thm-proof-1-last} go through with the replacements in~\eqref{eq:proof-replacement} and~\eqref{eq:tau-replacement-1}. We now verify that the $A\not \to B$ semi-causal constraint holds for this channel. Consider, for the modified channel, that
\begin{align}
& \operatorname{Tr}_{A}[\mathcal{N}_{X\to AB}(\omega_X)] \notag \\
& = \operatorname{Tr}[\Pi_{X}\omega
_{X}]\operatorname{Tr}_{A}[\rho_{AB}] \notag \\
& \qquad +\operatorname{Tr}[\left(  I_{X}-\Pi_{X}\right)  \omega_{X}%
]\operatorname{Tr}_{A}[\tau_{AB}^{(m)}]  \\
&  = \operatorname{Tr}[\Pi_{X}\omega
_{X}]\rho_{B} +\operatorname{Tr}[\left(  I_{X}-\Pi_{X}\right)  \omega_{X}%
]\rho_B  \\
&  = \rho_B.
\end{align}
Thus, the output state of $B$ has no dependence on the input state $\omega_X$ when system $A$ is traced out, so that the semi-causal constraint holds.

Now let us consider the case when $H_{\min}(A|B)_{\rho} < 0$. We employ the same channel in~\eqref{eq:ch-def-2}, with the exception that we set
\begin{equation}
\tau_{AB}^{(k,k^{\prime})}\coloneqq \frac{k^{\prime}I_{A}\otimes\rho_{B}-\rho_{AB}%
}{kk^{\prime}-1}.
\label{eq:tau-replacement-2}
\end{equation}
Then all of the steps in~\eqref{eq:alt-cond-ent-main-thm-proof-2-1}--\eqref{eq:alt-cond-ent-main-thm-proof-2-last} go through, with the replacements in~\eqref{eq:proof-replacement} and~\eqref{eq:tau-replacement-2}. We now verify that the $A \not \to B$ semi-causal constraint holds for this channel. Consider, for the modified channel, that
\begin{align}
& \operatorname{Tr}_{AA'}[\mathcal{N}_{X\rightarrow AA^{\prime}B}(\omega_{X})] \notag \\
& = \Big(  \operatorname{Tr}%
[|1\rangle\!\langle1|_{X}\omega_{X}]\operatorname{Tr}_{A}[\rho_{AB}]\notag \\
& \qquad +\operatorname{Tr}[\Pi_{X}%
\omega_{X}]
\operatorname{Tr}_{A}[\tau_{AB}^{(k,k^{\prime})}]\Big)   \otimes\operatorname{Tr}_{A'}[\mathbf{u}_{A^{\prime}}]\\
& =   \operatorname{Tr}%
[|1\rangle\!\langle1|_{X}\omega_{X}]\rho_{B}+\operatorname{Tr}[\Pi_{X}%
\omega_{X}]
\rho_{B} \\
& = \rho_B.
\end{align}
Thus, the output state of $B$ has no dependence on the input state $\omega_X$ when system $A$ is traced out, so that the semi-causal constraint holds. This establishes the inequality in~\eqref{eq:alt-main-thm}.

Finally, we consider the equality in~\eqref{eq:alt-main-thm-equality}. The argument here is actually exactly the same as in~\eqref{eq:additive-max-ent-unif-1}--\eqref{eq:final-final-proof-end}, except using the replacement in~\eqref{eq:proof-replacement}. We just need to verify that the channel in~\eqref{eq:channel-3-pf} is $A \not \to B$ semi-causal. However, the only output is system $X$, which belongs to Alice. Thus, after tracing it out, there is no remaining system, and the channel is then semi-causal.

\section{Proof of Theorem~\ref{thm:cond-ent-from-rel-ent}}

\label{app:proof-cond-ent-from-rel-ent}

First, we demonstrate that $\H$ satisfies the monotonicity property of conditional entropy from Definition~\ref{def:cond-maj-cond-mix}. Let $\mathcal{N}_{AB \rightarrow A'B'}$ be a conditionally mixing channel, and consider the bipartite density matrix $\rho_{AB}$. We begin by considering the case where $A = A'$ and
\begin{multline}
    \H(A \big| B)_{\mathcal{N}(\rho)} = 
    \\
    \log_2 |A| - \D(\mathcal{N}(\rho_{AB}) \big\| \mathbf{u}_{A} \otimes \operatorname{Tr}_{A}\!\left[\mathcal{N}(\rho_{AB})\right]) 
\end{multline}
and from using the fact that $\mathcal{N}$ is semi-causal and conditionally unital, we find that
\begin{align}
    & \mathbf{u}_{A} \otimes \operatorname{Tr}_{A}\!\left[\mathcal{N}(\rho_{AB}) \right] \notag \\
    &= \mathbf{u}_{A} \otimes \operatorname{Tr}_{A}\!\left[\mathcal{N}\left(\mathbf{u}_{A} \otimes \rho_{B}\right)\right] \\
    &= \mathcal{N}(\mathbf{u}_{A} \otimes \rho_{B}),
\end{align}
from which we have 
\begin{align}
    & \H(A \big| B)_{\mathcal{N}(\rho)} \notag \\
    &= \log_2 |A| - \D\!\left(\mathcal{N}(\rho_{AB}) \big\| \mathcal{N}(\mathbf{u}_{A} \otimes \rho_{B}) \right)  \\
    &\geq \log_2 |A| - \D\!\left(\rho_{AB} \big\| \mathbf{u}_{A} \otimes \rho_{B} \right) ,
\end{align}
where the last line follows from the data-processing inequality.

We also need to prove that it is invariant under the action of a local isometric channel acting on system~$A$. Let $\mathcal{V}_{A \rightarrow A'}$ be an arbitrary  isometric channel, and let~$C$ represent a Hilbert space such that $|C| = |A'| - |A|$. Then 
\begin{multline}
    \H(A | B )_{\mathcal{V}(\rho)} = \\
    \log_2 |A'| - \D(\mathcal{V}_{A \rightarrow A'}(\rho_{AB}) \| \mathbf{u}_{A'} \otimes \rho_{B}).
\end{multline}
Clearly, if $\mathcal{V}$ is a unitary channel such that $A \simeq  A'$, then $\H(A' | B)_{\mathcal{V}(\rho)} = \H(A | B)_{\rho}$. Then without loss of generality, we assume that 
\begin{align}
    \mathcal{V}_{A \rightarrow A'}(\rho_{AB}) & = \rho_{AB} \oplus \mathbf{0}_{CB} \\
&    = \begin{pmatrix}
    \rho_{AB} & \mathbf{0} \\
    \mathbf{0} & \mathbf{0}_{CB}
    \end{pmatrix},
\end{align}
because the conditional entropy is invariant under local unitaries. We can additionally rewrite
\begin{equation}
\mathbf{u}_{A'} = t \mathbf{u}_{A} \oplus (1-t) \mathbf{u}_{C},
\end{equation}
where $t \coloneqq  \frac{|A|}{|A'|}$. From this, it follows that 
\begin{align}
    & \D(\mathcal{V}_{A \rightarrow A'}(\rho_{AB})\big\| \mathbf{u}_{A'} \otimes \rho_{B}) \notag \\
    &= \D\!\left(\rho_{AB} \oplus \mathbf{0}_{CB} \big\| \left(t \mathbf{u}_{A} \otimes \rho_{B}\right)\oplus \left(\left(1-t\right)\mathbf{u}_{C} \otimes \rho_{B} \right)\right) \notag\\
    &= \D\!\left(\rho_{AB} \big\| t \mathbf{u}_{A} \otimes \rho_{B}\right) \notag\\
    &= \D\!\left(\rho_{AB}  \big\| \mathbf{u}_{A} \otimes \rho_{B}\right) - \log_2 t ,
\end{align}
where the second equality follows from a property of quantum relative entropy~\cite[Theorem~6.3.1]{Gour2023}.
Thus, upon substitution, we have that
\begin{align}
     & \H(A' | B)_{\mathcal{V}(\rho)} \notag \\
    &= \log_2 |A'| - \D\!\left(\rho_{AB} \big\| \mathbf{u}_{A} \otimes \rho_{B} \right) + \log_2 t \\
    &= \H(A | B)_{\rho}.
\end{align}

Putting the above two observations together, we now apply them to the definition given in Definition~\ref{def:cond-maj-cond-mix}. That is, suppose there exists a system $A''$ such that $|A''| \geq \max\{|A|,|A'|\}$, isometric channels $\mathcal{V}_{A\to A''}$ and $\mathcal{U}_{A'\to A''}$,  and a conditionally mixing channel~$\mathcal{N}_{A''B \rightarrow A''B'}$  such that
\begin{equation}
    \mathcal{U}_{A'\to A''}(\sigma _{A'B'}) =(\mathcal{N}_{A''B\to A''B'} \circ \mathcal{V}_{A\to A''})(\rho_{AB})
    \end{equation}
    Then
    \begin{align}
\H(A' | B')_{\sigma } & = \H(A'' | B')_{\mathcal{U}(\sigma )} \\
 & = \H(A'' | B')_{(\mathcal{N} \circ \mathcal{V})(\rho)} \\
 & \leq \H(A'' | B)_{\mathcal{V}(\rho)} \\
 & = \H(A | B)_{\rho}.
    \end{align}
    In the above steps, we employed the invariance of $\H$ under local isometric channels acting on the systems labeled by $A$, as well as the monotonicity under conditionally mixing channels.

Finally, additivity  of $\H$ follows immediately from additivity of~$\D$.


\subsection{Conditional min-entropy is a conditional entropy}

\label{app:cond-min-ent-is-cond-ent}

Here we justify that the conditional min-entropy is indeed a conditional entropy according to the definition in~\eqref{eq:cond-ent-gen-def-1st}--\eqref{eq:QCE-2nd-postulate-add}. Indeed, we use the definition of conditional min-entropy in~\eqref{eq:cond-min-ent-def} to rewrite it in terms of the max-relative entropy $D_{\max}(\rho\Vert \sigma)$ as
\begin{equation}
H_{\min}(A|B)_{\rho} = \log_2 |A| - D_{\max}(\rho_{AB} \Vert \u_A \otimes \rho_B),
\end{equation}
where the max-relative entropy of states $\rho$ and $\sigma$ is defined as~\cite{D09}:
\begin{equation}
D_{\max}(\rho\Vert \sigma) \coloneqq \log_2 \inf_{\lambda \geq 0} \{\rho 
\leq \lambda \sigma\}.
\end{equation}
Then we apply Theorem~\ref{thm:cond-ent-from-rel-ent}.

\section{Entropy inequalities}

\label{appG}

Entropy inequalities play a major role in information theory. For example, the strong subadditivity of the von Neumann entropy is a deep result in quantum Shannon theory that can be expressed as an inequality involving the von Neumann conditional entropy:
\be
H(A|B)_\rho\geq H(A|BC)_{\rho}\quad\forall\rho\in\md(ABC)\;,
\ee
where $H(A|B)_\rho\eqdef H(AB)_\rho-H(B)_\rho$ and $H(A)_\rho\eqdef-\tr[\rho_A\log_2\rho_A]$ is the von Neumann entropy. The above inequality holds for all conditional entropies: namely, for every conditional entropy $\H$, we have
\be
\H(A|B)_\rho\geq \H(A|BC)_{\rho}\quad\forall\rho\in\md(ABC)\;.
\ee
The above inequality is a simple consequence of the fact that tracing the subsystem $C$ is a conditionally mixing channel and therefore
\be
\rho_{ABC}\succ_A\rho_{AB}\;.
\ee

It is therefore natural to ask if there are other entropy inequalities that can be applied to all entropies. In this section we present a few such inequalities.

We show here that the teleportation protocol of communicating a quantum system from Bob to Alice is a conditionally mixing channel. This property implies in particular that (see arguments in Proposition~\ref{prop:other-ent-ineqs} below)
\be
\rho_{AB}\otimes\Phi_{\tA\tB}\succ_{\!\!A\tA}\rho_{A\tA}\;,
\ee
where $\Phi_{\tA\tB}$ is a maximally entangled state, $\rho_{A\tA}$ is the same state as $\rho_{AB}$ with system $B$ relabeled as $\tA$, and we assume for simplicity that all systems have the same dimension. The above relation immediately implies that every conditional entropy  $\H$ satisfies the following:
\begin{align}
\H(AB)_\rho & =\H(A\tA)_\rho\\
&\geq \H(A\tA|B\tB)_{\rho\otimes\Phi}\\
&=\H(A|B)_{\rho}+\H(\tA|\tB)_{\Phi}\;,
\end{align}
where the equality follows from additivity.
Now, from Theorem~\ref{thm:neg-cond-ent} we get that 
$\H(\tA|\tB)_{\Phi}=-\log_2|B|$, so that we arrive at the following entropy inequality:
\be
\H(AB)_\rho\geq \H(A|B)_{\rho}-\log_2|B|\;.
\ee
%

\begin{proposition}
\label{prop:other-ent-ineqs}
Let $\rho_{AB}$ be a bipartite state. Then%
\begin{align}
\mathbf{H}(A|B)_{\rho}  & \leq\mathbf{H}(AB)_{\rho}+\log_{2}\left\vert
B\right\vert ,\\
\mathbf{H}^{\blacktriangle}(A|B)_{\rho}  & \leq\mathbf{H}(AB)_{\rho}+\log
_{2}\left\vert B\right\vert .\label{eq:cond-ent-ineq-dim-bnd}%
\end{align}

\end{proposition}

\begin{proof}
Let $\Phi_{A^{\prime}B^{\prime}}$ be a maximally entangled state of Schmidt
rank $\left\vert B\right\vert $, with $\left\vert A^{\prime}\right\vert
=\left\vert B^{\prime}\right\vert =\left\vert B\right\vert $. We perform
teleportation to transmit system $A$ to $A^{\prime}$. In more detail, the
teleportation channel we consider is
\begin{multline}
\mathcal{N}_{A^{\prime}BB^{\prime}\rightarrow A^{\prime}}(\omega_{A^{\prime
}BB^{\prime}})\coloneqq \\
\sum_{x}\mathcal{U}_{A^{\prime}}^{x}\left(  \langle\Phi
^{x}|_{BB^{\prime}}\omega_{A^{\prime}BB^{\prime}}|\Phi^{x}\rangle_{BB^{\prime
}}\right)  ,
\end{multline}
where $\{|\Phi^{x}\rangle_{BB^{\prime}}\}_{x}$ is the standard Bell basis and
$\left\{  \mathcal{U}_{A^{\prime}}^{x}\right\}  _{x}$ is the set of unitary
correction channels in teleportation. The channel $\mathcal{N}_{A^{\prime
}BB^{\prime}\rightarrow A^{\prime}}$ is conditionally unital because%
\begin{align}
& \mathcal{N}_{A^{\prime}BB^{\prime}\rightarrow A^{\prime}}(I_{A^{\prime}%
}\otimes\sigma_{BB^{\prime}})\nonumber\\
& =\sum_{x}\mathcal{U}_{A^{\prime}}^{x}\left(  \langle\Phi^{x}|_{BB^{\prime}%
}\left(  I_{A^{\prime}}\otimes\sigma_{BB^{\prime}}\right)  |\Phi^{x}%
\rangle_{BB^{\prime}}\right)  \\
& =\sum_{x}\mathcal{U}_{A^{\prime}}^{x}(I_{A^{\prime}})\otimes\langle\Phi
^{x}|_{BB^{\prime}}\left(  \sigma_{BB^{\prime}}\right)  |\Phi^{x}%
\rangle_{BB^{\prime}}\\
& =\sum_{x}I_{A^{\prime}}\otimes\langle\Phi^{x}|_{BB^{\prime}}\left(
\sigma_{BB^{\prime}}\right)  |\Phi^{x}\rangle_{BB^{\prime}}\\
& =I_{A^{\prime}}\otimes\sum_{x}\langle\Phi^{x}|_{BB^{\prime}}\left(
\sigma_{BB^{\prime}}\right)  |\Phi^{x}\rangle_{BB^{\prime}}\\
& =I_{A^{\prime}}.
\end{align}
The channel $\mathcal{N}_{A^{\prime}BB^{\prime}\rightarrow A^{\prime}}$ is
also semi-causal because it is realized as a one-way LOCC\ channel from
$BB^{\prime}$ to $A^{\prime}$.\ Thus, $\mathcal{N}_{A^{\prime}BB^{\prime
}\rightarrow A^{\prime}}$ is conditionally mixing. Then consider that
\begin{align}
 & \mathbf{H}(A|B)_{\rho}-\log_{2}\left\vert B\right\vert 
\notag \\
& =\mathbf{H}(A|B)_{\rho}+\mathbf{H}(A^{\prime}|B^{\prime})_{\Phi}\\
& =\mathbf{H}(AA^{\prime}|BB^{\prime})\\
& \leq\mathbf{H}(AA^{\prime})_{\mathcal{N}(\rho\otimes\Phi)}\\
& =\mathbf{H}(AA^{\prime})_{\rho}\\
& =\mathbf{H}(AB)_{\rho}.
\end{align}
The first equality follows from Theorem~\ref{thm:neg-cond-ent}. The second equality follows
from the additivity postulate. The inequality follows because $\mathcal{N}%
_{A^{\prime}BB^{\prime}\rightarrow A^{\prime}}$ is conditionally mixing, as argued
above. The final equality is just a relabeling of systems.

The inequality in~\eqref{eq:cond-ent-ineq-dim-bnd} follows similarly,
using the fact that $\mathcal{N}_{A^{\prime}BB^{\prime}\rightarrow A^{\prime}%
}$ is conditionally unital.
\end{proof}

\begin{proposition}
Let $\rho_{AB}$ be a bipartite state. Then%
\begin{equation}
\mathbf{H}^{\blacktriangle}(A|B)_{\rho}\geq\mathbf{H}(AB)_{\rho}-\log
_{2}\left\vert B\right\vert .
\end{equation}

\end{proposition}

\begin{proof}
Consider the initial state%
\begin{equation}
\rho_{AA^{\prime}}\otimes|0\rangle\!\langle0|_{B^{\prime}},
\end{equation}
where $\left\vert A^{\prime}\right\vert =\left\vert B^{\prime}\right\vert
=\left\vert B\right\vert $ and we have relabeled $B$ of $\rho_{AB}$ as
$A^{\prime}$ (physically, the system $B$ begins on Alice's side, and so we
label it as $A^{\prime}$). Then perform the swap and prepare channel from
\cite[Definition~8]{Vempati2022unitaloperations}, which was shown there to be a conditionally
unital channel. That is, this channel is realized as $\mathcal{D}_{A^{\prime}%
}\circ\mathcal{F}_{A^{\prime}B^{\prime}}$, where $\mathcal{F}_{A^{\prime
}B^{\prime}}$ is the unitary swap channel and $\mathcal{D}_{A^{\prime}}$ is
the completely depolarizing channel (traces out its input and replaces with
the uniform state). This is conditionally unital because%
\begin{align}
(\mathcal{D}_{A^{\prime}}\circ\mathcal{F}_{A^{\prime}B^{\prime}}%
)(I_{A^{\prime}}\otimes\sigma_{B^{\prime}})  & =\mathcal{D}_{A^{\prime}%
}(\sigma_{A^{\prime}}\otimes I_{B^{\prime}})\\
& =\mathcal{D}_{A^{\prime}}(\sigma_{A^{\prime}})\otimes I_{B^{\prime}}\\
& =\mathbf{u}_{A^{\prime}}\otimes I_{B^{\prime}}\\
& =I_{A^{\prime}}\otimes\mathbf{u}_{B^{\prime}}.
\end{align}
However, this channel is not semi-causal because it allows for communication
from $A^{\prime}$ to $B^{\prime}$. That is,
\begin{align}
& (\mathcal{D}_{A^{\prime}}\circ\mathcal{F}_{A^{\prime}B^{\prime}}
)(\omega_{A^{\prime}}\otimes\sigma_{B^{\prime}}) \notag \\
& =\mathcal{D}_{A^{\prime}
}(\sigma_{A^{\prime}}\otimes\omega_{B^{\prime}})\\
& =\mathbf{u}_{A^{\prime}}\otimes\omega_{B^{\prime}}.
\end{align}
Now consider that
\begin{align}
& (\mathcal{D}_{A^{\prime}}\circ\mathcal{F}_{A^{\prime}B^{\prime}}%
)(\rho_{AA^{\prime}}\otimes|0\rangle\!\langle0|_{B^{\prime}})  \notag \\
& =\mathcal{D}%
_{A^{\prime}}(\rho_{AB^{\prime}}\otimes|0\rangle\!\langle0|_{A^{\prime}})\\
& =\rho_{AB^{\prime}}\otimes\mathcal{D}_{A^{\prime}}(|0\rangle\!\langle
0|_{A^{\prime}})\\
& =\rho_{AB^{\prime}}\otimes\mathbf{u}_{A^{\prime}}.
\end{align}
Then
\begin{align}
& \mathbf{H}(AB)_{\rho} \notag \\
& =\mathbf{H}(AA^{\prime})_{\rho}\\
& =\mathbf{H}^{\blacktriangle}(AA^{\prime}|B^{\prime})_{\rho_{AA^{\prime}%
}\otimes|0\rangle\!\langle0|_{B^{\prime}}}\\
& \leq\mathbf{H}^{\blacktriangle}(AA^{\prime}|B^{\prime})_{(\mathcal{D}%
_{A^{\prime}}\circ\mathcal{F}_{A^{\prime}B^{\prime}})(\rho_{AA^{\prime}%
}\otimes|0\rangle\!\langle0|_{B^{\prime}})}\\
& =\mathbf{H}^{\blacktriangle}(AA^{\prime}|B^{\prime})_{\rho_{AB^{\prime}%
}\otimes\mathbf{u}_{A^{\prime}}}\\
& =\mathbf{H}^{\blacktriangle}(A|B^{\prime})_{\rho_{AB^{\prime}}}%
+\mathbf{H}(A^{\prime})_{\mathbf{u}_{A^{\prime}}}\\
& =\mathbf{H}^{\blacktriangle}(A|B)_{\rho_{AB}}+\log_{2}\left\vert
B\right\vert .
\end{align}
The first equality follows from relabeling systems. The second equality
follows because conditional entropy is invariant under tensoring with another
state for the conditioning system, in turn being a special kind of conditionally unital channel. The inequality follows because the
swap and prepare channel is conditionally unital, as argued above. The
penultimate equality follows from additivity, and the final equality because
the entropy of the uniform state of dimension $\left\vert B\right\vert $ is
equal to $\log_{2}\left\vert B\right\vert $.
\end{proof}

\begin{proposition}
Let $\rho_{XA}$ be a classical--quantum state of the following form:%
\begin{equation}
\rho_{XA}=\sum_{x}p(x)|x\rangle\!\langle x|_{X}\otimes\rho_{A}^{x}.
\end{equation}
Then%
\begin{equation}
\mathbf{H}(X)_{\rho}\leq\mathbf{H}(AX)_{\rho}.
\end{equation}

\end{proposition}

\begin{proof}
For every state $\rho_{A}^{x}$, there is a unital channel $\mathcal{N}_{A}%
^{x}$ that prepares it from the pure state $|0\rangle\!\langle0|_{A}$. Then we
take the overall channel to be one that measures the classical system $X$ in
the standard basis $\{|x\rangle\}_{x}$ and conditioned on the outcome,
performs the channel $\mathcal{N}_{A}^{x}$ on the system $A$. Thus, the
overall channel is realized as%
\begin{equation}
\mathcal{N}_{XA}(\omega_{XA})=\sum_{x}|x\rangle\!\langle x|_{X}\otimes
\mathcal{N}_{A}^{x}(\langle x|_{X}\omega_{XA}|x\rangle_{X}).
\end{equation}
This channel is unital because
\begin{align}
& \mathcal{N}_{XA}(I_{XA}) \notag \\
& =\mathcal{N}_{XA}(I_{X}\otimes I_{A})\\
& =\sum_{x}|x\rangle\!\langle x|_{X}\otimes\mathcal{N}_{A}^{x}(\langle
x|_{X}\left(  I_{X}\otimes I_{A}\right)  |x\rangle_{X})\\
& =\sum_{x}|x\rangle\!\langle x|_{X}\otimes\mathcal{N}_{A}^{x}(I_{A})\\
& =\sum_{x}|x\rangle\!\langle x|_{X}\otimes I_{A}\\
& =I_{X}\otimes I_{A}.
\end{align}
Furthermore,%
\begin{equation}
\mathcal{N}_{XA}(\rho_{X}\otimes|0\rangle\!\langle0|_{A})=\rho_{XA}.
\end{equation}
This implies that%
\begin{align}
\mathbf{H}(X)_{\rho_{X}}  & =\mathbf{H}(XA)_{\rho_{X}\otimes|0\rangle
\langle0|_{A}}\\
& \leq\mathbf{H}(XA)_{\mathcal{N}_{XA}(\rho_{X}\otimes|0\rangle\!\langle0|_{A}%
)}\\
& =\mathbf{H}(XA)_{\rho_{XA}}.
\end{align}
This concludes the proof.
\end{proof}

\section{Example of a non-negative, weakly additive monotonic function}

\label{appg}

In this appendix, we prove that there exists a non-zero function that satisfies the following three properties:
\begin{enumerate}
    \item Monotonicity under conditional majorization; i.e., satisfying~\eqref{eq:QCE-1st-postulate-mono}.
    \item Weak additivity of the form~\eqref{weakadditivity}.
    \item Non-negative.
\end{enumerate}
The third property above demonstrates that the negativity of conditional entropy is closely related to the full additivity postulate in~\eqref{eq:QCE-2nd-postulate-add}. The construction of the example is based on a technique developed in~\cite{GT20} for the extension of resource monotones from one domain to another.

Let $\mathbb{H}$ be a classical conditional entropy. In~\cite{GT20} it was shown that every such function has two optimal extensions to the quantum domain. Here we are interested in the following extension. For every $\rho\in\mathfrak{D}(AB)$, we define
\be\label{g1}
f(A|B)_\rho\eqdef \inf_{\rho_{AB}\succ_A\sigma_{XY}}\mathbb{H}(X|Y)_{\sigma}\;,
\ee
where the infimum above is over all classical systems $X$ and $Y$ and every classical state $\sigma_{XY}$ with the property that $\rho_{AB}\succ_A\sigma_{XY}$.
The function $f$ has the following key properties~\cite{GT20}: 
\begin{enumerate}
    \item Reduction: For every classical state $\rho\in\md(XY)$ 
\be
f(X|Y)_\rho=\mathbb{H}(X|Y)_\rho\;.
\ee
\item Monotonicity: For all $\rho\in\md(AB)$ and $\omega\in\md(A'B')$ such that $\rho_{AB}\succ_A\omega_{A'B'}$,
\be
f(A'|B')_{\omega}\geq f(A|B)_\rho\;.
\ee
\item Subadditivity: The function $f$ is subadditive under tensor-product states.
\item Positivity: The function $f$ is non-negative.
\end{enumerate}
The last property follows from the definition in~\eqref{g1}. Since $f$ is subadditive, the limit
\be
g(A|B)_\rho\eqdef\lim_{n\to\infty}\frac1n f(A^n|B^n)_{\rho^{\otimes n}}
\ee
exists. Therefore, the function $g$ is weakly additive. Moreover, the function $g$ satisfies the first two properties above, namely, reduction to $\mathbb{H}$ on classical states and behaves monotonically under conditional majorization. Observe that the reduction of $g$ to $\mathbb{H}$ on classical states implies that it is not equal to the zero function since $\mathbb{H}$ is not the zero function. Since $g$ is non-negative it cannot be a conditional entropy. This means that $g$ cannot be fully additive, as this is the only property that does not follow directly from its definition.



\end{document}